\spnewtheorem{fact}[theorem]{Fact}{\itshape}{\rmfamily}
\crefname{fact}{Fact}{Facts}
\spnewtheorem*{assumption}{Assumption}{\itshape}{\rmfamily}
\crefname{assumption}{Assumption}{Assumptions}
\newcommand{\parname}[1]{\paragraph{#1.}}
\title{%
  On the Coverability Problem for Pushdown Vector Addition Systems
  in One Dimension%
  \thanks{%
    This work was partially supported by
    ANR project \textsc{ReacHard} (ANR-11-BS02-001).
  }
}
\titlerunning{On Coverability for Pushdown VAS in One Dimension}
\author{%
    J\'{e}r\^{o}me Leroux\inst{1}
  \and
  Gr\'{e}goire Sutre\inst{1}
  \and
  Patrick Totzke\inst{2}
}
\institute{%
  Univ. Bordeaux \& CNRS, LaBRI, UMR 5800, Talence, France
  \and
  Department of Computer Science, University of Warwick, UK
}
\begin{document}

\maketitle

\begin{abstract}
  Does the trace language of a given vector addition system (VAS) intersect
with a given context-free language?
This question lies at the
heart of several verification questions involving
recursive programs with integer parameters.
In particular, it is equivalent to the
coverability problem for VAS that operate on a pushdown stack.
We show decidability in dimension one,
based on an analysis of a
new model called grammar-controlled vector addition systems.

\end{abstract}

\section{Introduction}
Pushdown systems
are a well-known and natural formalization of recursive programs.
Vector addition systems (VAS)
are widely used to model concurrent systems
and programs with integer variables.
Pushdown vector addition systems (pushdown VAS) combine the two:
They are VAS extended with a pushdown stack and allow
to model, for instance,
asynchronous programs~\cite{GantyM12} and,
more generally,
programs with recursion and integer variables.

Despite the model's relevance for automatic program verification,
most classical model-checking problems %
are so far only partially solved.
Termination and boundedness are decidable but their complexity
is open~\cite{LPS2014}.
Coverability and reachability
are known to be \TOWER-hard \cite{Laz2013}, but their decidability is open.
In fact, reachability and the seemingly simpler coverability problem
are essentially the same for pushdown VAS:
there is a simple logarithmic-space reduction from reachability to coverability
that only adds one extra dimension.

\parname{Contributions}
Our main result is that coverability is decidable for $1$-dimensional pushdown VAS.
We work with a new grammar-based model called grammar-controlled vector addition
systems (GVAS), which amounts to VAS restricted to firing sequences defined
by a context-free grammar.
In dimension one,
this model corresponds to two-stack pushdown systems where
one of the two stacks uses a single stack symbol.
To prove our main result,
we show that it is enough to check finitely many potential certificates
of coverability.
The latter are parse trees of the context-free grammar
annotated with counter information from the $1$-dimensional VAS.
We truncate these annotated parse trees thanks to
an analysis of the asymptotic behavior of the summary function induced
by the $1$-dimensional GVAS.
Asymptotically-linear summary functions are shown to be effectively
Presburger-definable, which makes the above truncation effective.

\parname{Related work}
This paper continues a line of research that investigates the limitations of
extending VAS while preserving the decidability of important verification
questions, such as reachability, coverability and boundedness.

The coverability and boundedness problems for ordinary VAS are long known to be
\EXPSPACE-complete \cite{Lip1976,Rac1978} and reachability is decidable \cite{May1981,Kos1982,LEROUX-POPL2011}.
In recent years, several extensions of VAS have been considered with respect to
decidability and complexity of reachability problems.
For instance, Reinhardt~\cite{Rei2008} showed that reachability remains
decidable for VAS in which one dimension can be tested for zero.
\emph{Branching VAS} introduce split-transitions
and can be interpreted as bottom-up or top-down
tree acceptors.
\emph{Alternating VAS} add a limited form of alternation where only one player is affected by the counters.
Coverability and boundedness in these models are 2-\EXPTIME-complete
\cite{DJLL2013,CS2014},
reachability is \TOWER-hard for branching
and undecidable for alternating VAS \cite{LS2014,CS2014}.

Closer to this paper is the work of Bouajjani, Habermehl and
Mayr~\cite{BHM2003}, who
study a model called BPA($\Z$).
These are context-free grammars where
nonterminals carry an integer parameter that can be
evaluated and passed on
when applying a production rule.
They show how to compute a symbolic representation of the reachability set.
Their formalism,
like the $1$-dimensional GVAS considered here,
can model recursive programs with one integer variable.
But while BPA($\Z$) allows arbitrary Presburger-definable operations
on the variable,
it cannot model return values. %

Atig and Ganty~\cite{AG2011}
also study the context-free restriction of the reachability relation in vector
addition systems.
Instead of restricting the dimension of the VAS,
they restrict the context-free language
and show that reachability is decidable
for the subclass of indexed context-free languages.

\parname{Outline}
We first recall some background and notation for context-free grammars.
\Cref{sec:model} formally introduces grammar-controlled vector addition systems,
their coverability problem and the required technology to solve it in
dimension one.
In \cref{sec:coverability}, we show the existence of small certificates.
These are subsequently proved to be recursive in two steps.
\cref{sec:thin} shows that, for so-called thin GVAS,
the step relation is effectively Presburger-definable.
Then,
summary functions are shown to be
computable by reduction to the thin case in \cref{sec:ratios}.

\section{Preliminaries}
\label{sec:preliminaries}
We let $\setRbar \eqdef \setR \cup \{-\infty, +\infty\}$ denote the
extended real number line and use the standard extensions
of $+$ and $\leq$ to $\setRbar$.
Recall that $(\setRbar, \leq)$ is a complete lattice.
$\setZbar \eqdef \setZ \cup \{-\infty, +\infty\}$ and
$\setNbar \eqdef \setN \cup \{-\infty, +\infty\}$
denote the (complete) sublattices of extended integers
and extended natural numbers, respectively.\footnote{%
  Our extension of $\setN$ contains $-\infty$ for technical reasons.
}

\parname{Words}
Let $A^*$ be the set of all finite \emph{words} over the alphabet $A$.
The \emph{empty word} is denoted by $\eps$.
We write $\len{w}$ for the \emph{length} of a word $w$ in $A^*$
and $w^k \eqdef w w \cdots w$ for its $k$-fold concatenation.
The \emph{prefix} partial order $\prefix$ over words
is defined by $u \prefix v$ if $v=uw$ for some word $w$.
We write $u\pprefix v$ if $u$ is a proper prefix of $v$.
A \emph{language} is a subset $L\subseteq A^*$.
A language $L$ is said to be \emph{prefix-closed} if
$u \prefix v$ and $v \in L$ implies $u \in L$.

\parname{Trees}
A \emph{tree} $T$ is a finite prefix-closed subset of $\setN^*$ satisfying the
property that if $tj$ is in $T$ then $ti$ in $T$ for all $i<j$.
Elements of $T$ are called \emph{nodes}.
Its \emph{root} is the empty word $\eps$.
An \emph{ancestor} of a node $t$ is a prefix $s \prefix t$.
A \emph{child} of a node $t$ in $T$ is a node $tj$ in $T$ with $j$ in
$\setN$.
A node is called a \emph{leaf} if it has no child,
and is said to be \emph{internal} otherwise.
The \emph{size} of a tree $T$ is its cardinal $\card{T}$,
its \emph{height} is the maximal length $|t|$ for any of its nodes $t\in T$.

\parname{Context-free Grammars}
A \emph{context-free grammar} is a triple $G=(V,A,R)$,
where $V$ and $A$ are disjoint finite sets of
\emph{nonterminal} and \emph{terminal} symbols,
and $R\subseteq V\times (V\cup A)^*$ is a finite set of
\emph{production rules}.
The \emph{degree} of $G$ is $\degree[^G] \eqdef \max\{\len{\alpha} \mid (X,\alpha)\in R\}$.
We write
\begin{equation*}
  X \pstep \alpha_1 \mid \alpha_2 \mid \ldots \mid \alpha_k
\end{equation*}
to denote that $(X,\alpha_1),\ldots,(X,\alpha_k) \in R$.
For all words
$w,w' \in (V\cup A)^*$,
the grammar admits a \emph{derivation step}
$w\gstep{}w'$
if there exist two words $u,v$ in $(V\cup A)^*$ and a
production rule $(X,\alpha)$ in $R$
such that $w= uXv$ and $w'=u\alpha v$.
Let $\gstep{*}$ denote the reflexive and transitive closure of
$\gstep{}$.
The \emph{language} of a word $w$ in $(V\cup A)^*$
is the set $\lang[^G]{w} \eqdef \{z\in A^*\mid w\gstep{*}z\}$.
A nonterminal $X$ is said to be \emph{derivable} from a word
$w\in (V\cup A)^*$ if there exists $u,v\in (V\cup A)^*$ such that
$w\gstep{*}uXv$.
A nonterminal $X\in V$ is called \emph{productive} if $\lang[^G]{X}\neq\emptyset$.

\parname{Parse Trees}
A \emph{parse tree} for a context-free grammar $G=(V,A,R)$
is a tree $T$ equipped with a labeling function
$\lsymoperator:T\to (V \cup A \cup \{\varepsilon\})$
such that $R$ contains the production rule
$\lsym{t} \pstep \lsym{t0} \cdots \lsym{tk}$ for every internal node $t$
with children $t0, \ldots, tk$.
In addition,
each leaf $t\not=\varepsilon$ with $\lsym{t} = \varepsilon$ is the only child of its parent.
Notice that $\lsym{t} \in V$ for every internal node $t$.
A parse tree is called \emph{complete} when
$\lsym{t} \in (A \cup \{\varepsilon\})$ for every leaf $t$.
The \emph{yield} of a parse tree $(T, \lsymoperator)$ is the word
$\lsym{t_1} \cdots \lsym{t_\ell}$ where
$t_1, \ldots, t_\ell$ are the leaves of $T$ in lexicographic order
(informally, from left to right).
Observe that $S \gstep{*} w$, where $S = \lsym{\varepsilon}$
is the label of the root and $w$ is the yield.
Conversely,
a parse tree with root labeled by $S$ and yield $w$
can be associated to any derivation $S\gstep{*}w$.

\section{Grammar-Controlled Vector Addition Systems}
\label{sec:model}
We first recall the main concepts of vector addition systems.
Fix %
$k \in \setN$. %
A $k$-dimensional \emph{vector addition system} (shortly, \emph{$k$-VAS})
is a finite set $\vec{A} \subseteq \setZ^k$
of \emph{actions}.
Its operational semantics is given by the binary \emph{step} relations
$\vstep{\vec{a}}$ over $\N^k$, where $\vec{a}$ ranges over $\vec{A}$,
defined by
$\vec{c} \vstep{\vec{a}} \vec{d}$ if $\vec{d} = \vec{c} + \vec{a}$.
The step relations are extended to words
and languages as expected:
$\vstep{\eps}$ is the identity,
$\vstep{z \vec{a}} {\eqdef} \vstep{\vec{a}} \circ \vstep{z}$
for $z \in \vec{A}^*$ and $\vec{a} \in \vec{A}$, and
$\vstep{L} {\eqdef} \,\bigcup_{z \in L} \vstep{z}$
for $L \subseteq \vec{A}^*$.
For every word $z = \vec{a}_1 \cdots \vec{a}_k$ in $\vec{A}^*$,
we let $\sum z$ denote the sum $\vec{a}_1 + \cdots + \vec{a}_k$.
Notice that
$\vec{c}\vstep{z}\vec{d}$ implies $\vec{d}-\vec{c}=\sum z$,
for every $\vec{c},\vec d\in\setN^k$.

The \emph{VAS reachability problem} asks,
given a $k$-VAS $\vec{A}$ and vectors $\vec{c}, \vec{d} \in \setN^k$,
whether $\vec{c} \vstep{\vec{A}^*} \vec{d}$.
This problem is known to be \EXPSPACE-hard~\cite{Lip1976},
but no upper bound has been established yet.
The \emph{VAS coverability problem} asks,
given a $k$-VAS $\vec{A}$ and vectors $\vec{c}, \vec{d} \in \setN^k$,
whether $\vec{c} \vstep{\vec{A}^*} \vec{d}'$ for some vector
$\vec{d}' \geq \vec{d}$.
This problem is known to be \EXPSPACE-complete~\cite{Lip1976,Rac1978}.

\begin{definition}[GVAS]
  A $k$-dimensional \emph{grammar-controlled vector addition system}
  (shortly, \emph{$k$-GVAS})
  is a context-free grammar
  $G=(V,\vec{A},R)$ with %
  $\vec{A} \subseteq \Z^k$.
\end{definition}

We give the semantics of GVAS by extending the binary step relations
of VAS to words over $V\cup A$.
Formally,
for every word $w \in (V\cup A)^*$,
we let $\vstep{w} {\eqdef} \vstep{L}$
where $L = \lang[^G]{w}$ is the language of $w$.
The \emph{GVAS reachability problem} asks,
given a $k$-GVAS $G=(V,\vec{A},R)$, a nonterminal $S \in V$ and two vectors
$\vec{c}, \vec{d} \in \setN^k$,
whether $\vec{c} \vstep{S} \vec{d}$.
The \emph{GVAS coverability problem} asks,
given the same input,
whether $\vec{c} \vstep{S} \vec{d}'$ for some vector
$\vec{d}' \geq \vec{d}$.
These problems can equivalently be
rephrased in terms of VAS that have access to a pushdown stack,
called \emph{stack VAS} in~\cite{Laz2013}
and \emph{pushdown VAS} in~\cite{LPS2014}.
Lazi\'{c}~\cite{Laz2013} showed a {\TOWER} lower bound for these two problems,
by simulating bounded Minsky machines. Their decidability remains open.
As remarked in \cite{Laz2013},
GVAS reachability can be reduced to GVAS coverability.
Indeed, a simple ``budget'' construction allows to reduce,
in logarithmic space, the reachability problem for $k$-GVAS to the
coverability problem for $(k+1)$-GVAS.
This induces a hierarchy of decision problems, consisting of, alternatingly,
coverability and reachability for growing dimension.
The decidability of all these problems is open.
This motivates the study of the most simple case: the coverability problem in
dimension one, which is the focus of this paper.
Our main contribution is the following result.

\begin{theorem}\label{thm:main}
  The coverability problem is decidable for $1$-GVAS.
\end{theorem}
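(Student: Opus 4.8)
The plan is to reduce coverability to the computation of \emph{summary functions} and then to truncate witnessing parse trees to a computable size. For each productive nonterminal $X$, define the summary function $f_X \colon \setN \to \setNbar$ by $f_X(c) \eqdef \sup\{d \mid c \vstep{X} d\}$, with $\sup \emptyset = -\infty$. Since every terminal is an integer action and configurations range over $\setN$, the relation $\vstep{X}$ is monotone, so $f_X$ is non-decreasing, and coverability of $d$ from the configuration $(S,c)$ holds precisely when $f_S(c) \geq d$. It therefore suffices to compute enough of $f_S$ to compare it with $d$. Following \cref{sec:coverability}, I would record a witness for $f_S(c) \geq d$ as an \emph{annotated} parse tree: a complete parse tree with root labelled $S$ whose yield $z$ satisfies $c \vstep{z} d'$ for some $d' \geq d$, where each node is tagged with the counter value on entering and on leaving the factor of $z$ that its subtree generates, subject to the evident local consistency (leaves apply their terminal actions, and every tagged value lies in $\setN$).

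The crux is that such trees need not be unfolded fully. I would use the fact, to be established later, that each $f_X$ is \emph{asymptotically linear}: there is a threshold $N_X \in \setN$ so that for all $c \geq N_X$, either $f_X(c) = +\infty$, or $f_X(c) = c + \delta_X$ for a fixed displacement $\delta_X \in \setZ$. Intuitively, once the counter is large the non-negativity constraint no longer obstructs the most profitable derivations of $X$, so the best reachable value grows in lockstep with the input. Consequently, at any node whose incoming counter exceeds the relevant threshold, the entire subtree can be replaced by a single \emph{summary leaf} carrying the value $c + \delta_X$, with no deeper unfolding. On the remaining tree every incoming counter value along a root-to-leaf path stays below $\max_X N_X$; a minimal truncated witness then has no repeated (nonterminal, incoming-counter) pair on any path, which by pigeonhole bounds the path length in terms of $\card{V}$ and the thresholds, and together with the bound $\degree$ on branching yields a computable bound $B$ on the size of a truncated witness whenever one exists.

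To turn $B$ into a decision procedure I need the thresholds $N_X$ and displacements $\delta_X$ to be computable, which is exactly the effective Presburger-definability of the summary functions: the step relation is shown effectively Presburger-definable for thin GVAS in \cref{sec:thin}, and this is lifted to arbitrary $1$-GVAS through the analysis of ratios in \cref{sec:ratios}. Given these representations, I would enumerate all annotated parse trees of size at most $B$ in which summary leaves are resolved against the computed Presburger formulas; each candidate is checked for local consistency and for the root condition $d' \geq d$ by finitely many Presburger queries. As there are finitely many candidates, each check is decidable, and a truncated witness of size at most $B$ exists iff coverability holds, this decides the problem.

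The genuinely hard part is not the truncation argument but establishing its two prerequisites: that each summary function is indeed asymptotically linear, and that its threshold and displacement are computable. Computing $\delta_X$ amounts to determining the largest net displacement realizable by a derivation of $X$ under the non-negativity constraint, where recursion can simultaneously pump the counter up and be gated by it. Untangling this interaction between context-free recursion and the counter is precisely what the thin-GVAS analysis and the study of asymptotic ratios are designed to achieve, and I expect it to be the main technical obstacle.
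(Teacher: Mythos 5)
Your overall architecture---reformulating coverability as $\summary{S}(c) \geq d$, annotating parse trees with entry/exit counter values, truncating them to a computable size, and discharging the remaining checks with Presburger queries obtained from the thin-GVAS analysis---matches the paper. But the structural fact you rest the truncation on is false. You claim every summary function is asymptotically linear: beyond some threshold $N_X$, either $\summary{X}(c)=+\infty$ or $\summary{X}(c)=c+\delta_X$ for a fixed $\delta_X$. The paper's own Example~\ref{ex:multiplyby2} refutes this: there $\summary{S}(n)=2n$ for all $n$, everywhere finite yet never of the form $n+\delta$. The Ackermann example is worse: $\summary{X_3}$ is everywhere finite and grows exponentially. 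Eventual linearity, namely $\summary{S}(n)=n+\displ{S}$ for $n\geq\degree^{\card{V}}$, holds only for nonterminals with \emph{finite displacement} (\cref{lem:asymptotic-summary-finite-displ}); once $\displ{X}=+\infty$ the summary may be eventually infinite, linear with slope greater than one, or superlinear but everywhere finite. Consequently your ``summary leaf'' replacement (cutting a subtree as soon as the incoming counter exceeds $N_X$) and the resulting pigeonhole bound on path lengths (no repeated pair of nonterminal and incoming counter below $\max_X N_X$) do not go through.

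The paper's truncation is driven by a different dichotomy, on the ratio $\ratio{X}=\liminf_{n}\summary{X}(n)/n$ rather than on the eventual shape of $\summary{X}$. Leaves $t$ with $\ratio{\lsym{t}}<+\infty$ are checked directly: after reduction to an equivalent thin GVAS, $\vstep{X}$ is effectively Presburger-definable, so $\summary{X}$ is computable there even though it may be, say, $2n$---semilinear but not eventually of the form $n+\delta$. Leaves with $\ratio{\lsym{t}}=+\infty$ are not evaluated at all; a certificate instead requires them to have a proper ancestor with the same nonterminal and a strictly smaller input value, and \cref{lem:summary-for-infinite-ratio} shows that this pumping configuration forces $\summary{\lsym{t}}(\lin{t})=+\infty$, so the flow condition holds for free. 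The size bound comes not from bounded counters but from optimality of the flow tree (tightness and balancedness, \cref{lem:optimal-implies-equalities-in-flows,lem:optimal-implies-balanced}) and a descent argument on \emph{output} values along a branch. To repair your proof you would need to replace asymptotic linearity by these two weaker but correct ingredients; as written, the argument already fails on the first nontrivial example.
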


For the remainder of the paper,
we restrict our attention to the dimension one,
and shortly write GVAS instead of $1$-GVAS.
Every GVAS can be effectively normalized, by
removing non-productive nonterminals,
replacing terminals $a \in \setZ$ by words over the alphabet $\{-1,0,1\}$, and
enforcing, through zero padding (since $\vstep{0}$ is the identity relation),
that $\len{\alpha} \geq 2$ for some production rule $X \pstep \alpha$.
So in order to simplify our proofs,
we consider w.l.o.g.~only GVAS of this simpler form.

\begin{assumption}
  We restrict our attention to GVAS $G=(V,A,R)$ where
  every $X \in V$ is productive,
  where $A = \{-1,0,1\}$,
  and of degree $\degree[^G] \geq 2$.
\end{assumption}

We associate to a GVAS $G$ and a word $w\in (V\cup A)^*$ the
\emph{displacement} $\displ[^G]{w} \in \setZbar$ and the
\emph{summary} function
$\summary[^G]{w}:\setNbar\to\setNbar$ defined by
$$
\displ[^G]{w} \ \eqdef \ \sup\{\textstyle\sum z \mid z \in \lang[^G]{w}\}
\qquad\qquad
\summary[^G]{w}(n) \ \eqdef \ \sup\{d \mid \exists c \leq n : c\vstep{w}d\}
$$
Informally,
$\displ[^G]{w}$ is the ``best shift'' achievable by a word in $\lang[^G]{w}$, and
$\summary[^G]{w}(n)$ gives the ``largest'' number that is reachable
via some word in $\lang[^G]{w}$ starting from $n$ or below.
When no such number exists, $\summary[^G]{w}(n)$ is $-\infty$
(recall that $\sup \emptyset = -\infty$).
Since all nonterminals are productive,
the language $\lang[^G]{w}$ is not empty.
Therefore,
$\displ[^G]{w} > -\infty$ and
$\summary[^G]{w}(n) > -\infty$ for some $n \in \setN$.

\begin{remark}[Monotonicity]
  \label{rem:summary-monotonicity}
  For every $w \in (V\cup A)^*$ and $c,d,e \in \setN$,
  $c\vstep{w}d$ implies $c+e\vstep{w}d+e$.
  Consequently, $\summary[^G]{w}(n + e) \geq \summary[^G]{w}(n) + e$
  holds for every $w \in (V\cup A)^*$, $n \in \setNbar$ and $e\in\setN$.
\end{remark}

A straightforward application of Parikh's theorem shows that
$\displ[^G]{w}$ is effectively computable from $G$ and $w$.
We will provide in \cref{sec:ratios}
an effective characterization of
$\summary[^G]{w}$ when the displacement $\displ[^G]{w}$ is finite.
In order to characterize functions $\summary[^G]{w}$
where the displacement $\displ[^G]{w}$ is infinite,
it will be useful to consider the \emph{ratio} of $w$, defined as
\begin{equation*}
  \ratio[^G]{w} \ \eqdef \ \liminf_{n \to +\infty}\frac{\summary[^G]{w}(n)}{n}
\end{equation*}
Notice that $\ratio[^G]{w} \geq 1$.
This fact follows from \cref{rem:summary-monotonicity} and the
observation that $\summary[^G]{w}(n) > -\infty$ for some $n \in \setN$.
From now on, we just write
$\lang{w}$, $\degree$, $\displ{w}$, $\summary{w}$ and $\ratio{w}$
when $G$ is clear from the context.

\begin{example}\label{ex:multiplyby2}
  Multiplication by $2$ can be expressed
  as a summary function using the GVAS
  with production rules $S \pstep -1 \: S \: 1 \: 1 \mid \varepsilon$.
  Indeed, for every $c$,
  \begin{align*}
  c\vstep{S}d\ 
  &\iff \exists n\in\setN : c\vstep{(-1)^n(11)^n}d\\
  &\iff \exists n\leq c : c\vstep{(-1)^n}c-n\vstep{(11)^n}c+n=d
  \ \iff \ c \leq d \leq 2c
  \end{align*}
  Therefore, $\summary{S}(n)=2n$ for every $n \in \setN$.
  Observe that $\displ{S}=+\infty$ and $\ratio{S}=2$.
  \qed
\end{example}

\begin{example}
  The Ackermann functions $A_m:\setN\to\setN$, for $m\in\setN$, are defined by
  induction for every $n\in\setN$ by:
  $$A_m(n)\ \eqdef \ 
  \begin{cases}
    n+1 & \text{ if }m=0\\
    A_{m-1}^{n+1}(1) & \text{ if }m>0\\
  \end{cases}
  $$
  These functions are expressible as summary functions for the GVAS with
  nonterminals $X_0, \ldots, X_m$ and with production rules
  $X_0 \pstep 1$ and $X_i \pstep -1 \: X_i \: X_{i-1} \mid 1 X_{i-1}$ for $1\le i\le m$.
  It is routinely checked that $\summary{X_m}(n) = A_m(n)$ for every $n \in \setN$.
  Notice also that
  $\ratio{X_0}=1$, $\ratio{X_1}=2$, and $\ratio{X_m}=+\infty$
  for every $m\geq 2$.
  \qed
\end{example}

\vspace{-0.6em} %
\begin{restatable}{lemma}{summarycompositionandpropagation}
  \label{lem:summary-composition-and-propagation}
  For every two words $u, v \in (V\cup A)^*$,
  the following properties hold:
  \begin{enumerate}
  \item
    $\displ{uv}=\displ{u}+\displ{v}$ and
    $\summary{u v} = \summary{v} \circ \summary{u}$. %
  \item
    If $u \gstep{*} v$ then
    $\displ{u} \geq \displ{v}$,
    $\ratio{u} \geq \ratio{v}$,
    and
    $\summary{u}(n) \geq \summary{v}(n)$ for all $n \in \setNbar$.
  \end{enumerate}
\end{restatable}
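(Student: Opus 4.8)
The plan is to reduce both statements to two elementary facts about the underlying grammar and then feed these into the definitions of $\displ{\cdot}$, $\summary{\cdot}$ and $\ratio{\cdot}$. The two facts are: (i) $\lang{uv} = \lang{u}\cdot\lang{v}$, and (ii) $u \gstep{*} v$ implies $\lang{v} \subseteq \lang{u}$. Fact (ii) is immediate: if $v \gstep{*} z$ then $u \gstep{*} v \gstep{*} z$, so $z \in \lang{u}$. Fact (i) is the standard observation that a context-free grammar never rewrites across the boundary between $u$ and $v$; the inclusion $\supseteq$ follows by deriving the two halves independently, and $\subseteq$ by an induction on the length of the derivation $uv \gstep{*} z$ that maintains a factorization $z = z_1 z_2$ with $u \gstep{*} z_1$ and $v \gstep{*} z_2$. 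From (i) together with $\vstep{z_1 z_2} = \vstep{z_2} \circ \vstep{z_1}$, one obtains the relational identity $\vstep{uv} = \vstep{v} \circ \vstep{u}$: indeed $c \vstep{uv} d$ holds iff some $z_1 \in \lang{u}$ and $z_2 \in \lang{v}$ admit $c \vstep{z_1} e \vstep{z_2} d$, i.e.\ iff $c \vstep{u} e \vstep{v} d$ for some $e$.

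For the displacement part of item~1, note that $\sum (z_1 z_2) = \sum z_1 + \sum z_2$, so by (i) the set $\{\sum z \mid z \in \lang{uv}\}$ equals $\{\sum z_1 + \sum z_2 \mid z_1 \in \lang{u}, z_2 \in \lang{v}\}$; since $\lang{u}$ and $\lang{v}$ are nonempty (all nonterminals are productive) both displacements are $> -\infty$, so the supremum of a sum of two independent quantities is the sum of their suprema in $\setZbar$, giving $\displ{uv} = \displ{u} + \displ{v}$.

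The summary composition $\summary{uv} = \summary{v} \circ \summary{u}$ is the main obstacle, since passing from the relational identity to the functions requires handling suprema with care; the key tool is \cref{rem:summary-monotonicity}. Monotonicity shows that the map $\sigma(e) \eqdef \sup\{d \mid e \vstep{v} d\}$ is nondecreasing and that, for finite $m$, the supremum defining $\summary{v}(m)$ is attained at the largest admissible starting value, so $\summary{v}(m) = \sigma(m)$. Writing $E \eqdef \{e \mid \exists c \le n : c \vstep{u} e\}$, the relational identity gives $\summary{uv}(n) = \sup_{e \in E}\sigma(e)$, and since $\sigma$ is nondecreasing with $\sup E = \summary{u}(n)$ this equals $\sigma(\summary{u}(n)) = \summary{v}(\summary{u}(n))$. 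I would split on whether $\sup E$ is attained (so the outer supremum is a maximum) or $\summary{u}(n) = +\infty$ (a cofinal/limit argument), and dispose of the extended arguments $n \in \{-\infty, +\infty\}$ directly from the definitions. The one delicate point is that $\sigma(e)$ may equal $-\infty$ for small $e$, because a word of $\lang{v}$ may be unfireable from a low counter value; this leaves $\sigma$ nondecreasing and is absorbed by the $\setNbar$ conventions.

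Item~2 is then a short consequence of fact (ii). From $\lang{v} \subseteq \lang{u}$ we get $\vstep{v} \subseteq \vstep{u}$ as relations and $\{\sum z \mid z \in \lang{v}\} \subseteq \{\sum z \mid z \in \lang{u}\}$, so taking suprema over the larger sets yields $\displ{u} \ge \displ{v}$ and $\summary{u}(n) \ge \summary{v}(n)$ for every $n \in \setNbar$. The ratio inequality $\ratio{u} \ge \ratio{v}$ follows immediately, since the pointwise inequality $\summary{u}(n) \ge \summary{v}(n)$ is preserved after dividing by $n$ and taking $\liminf_{n \to +\infty}$.
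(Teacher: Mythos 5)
Your proposal is correct and follows essentially the same route as the paper: both parts reduce to the identities $\lang{uv}=\lang{u}\lang{v}$ and $\lang{v}\subseteq\lang{u}$, combined with the monotonicity of summary functions (\cref{rem:summary-monotonicity}). Your packaging of the composition identity as a sup-exchange for the monotone map $\sigma$, with the case split on whether $\sup E$ is attained, is a rephrasing of the paper's two-inequality argument, whose $\geq$ direction likewise rests on the fact that $\summary{u}(n)$ is attained (or approached cofinally) by actual runs of $u$.
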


An equivalent formulation of the coverability problem is the question whether
$\summary{S}(c) \geq d$ holds,
given a nonterminal $S \in V$ and two numbers $c, d \in \setN$.
We solve this problem by exhibiting small certificates for $\summary{S}(c)\ge
d$, that take the form of (suitably truncated) annotated parse trees.

\section{Small Coverability Certificates}
\label{sec:coverability}
To solve the coverability problem,
we annotate parse trees
in a way that is consistent with the summary functions.
A \emph{flow tree} for a GVAS $G$ is a parse tree $(T, \lsymoperator)$ for $G$
equipped with two functions $\linoperator, \loutoperator: T \to \setN$,
assigning an \emph{input} and an \emph{output} value to each node,
and satisfying, for every node $t \in T$,
the following \emph{flow conditions}:
\begin{enumerate}
\item
  \label{flow-conditions}
  If $t$ is internal with children $t0, \ldots, tk$, then
  $\lin{t0} \leq \lin{t}$,
  $\lout{t} \leq \lout{tk}$, and
  $\lin{t(j+1)} \leq \lout{tj}$ for every $j = 0, \ldots, k-1$.
\item
  If $t$ is a leaf then
  $\lout{t} \leq \summary{\lsym{t}}(\lin{t})$.
\end{enumerate}
We shortly write $\lnode{t}{c}{\#}{d}$ to mean that
$(\lin{t}, \lsym{t}, \lout{t}) = (c, \#, d)$.
A flow tree is called \emph{complete} when the underlying
parse tree is complete, i.e.,
when $\lsym{t} \in (A \cup \{\varepsilon\})$ for every leaf $t$.
The following lemmas state useful properties of flow trees
that can be shown
using the flow conditions and the monotonicity of summary functions
(see \cref{rem:summary-monotonicity}).
A consequence is that $\summary{S}(c)\ge d$ holds
if, and only if,
there exists a complete flow tree with root $\lnode{\varepsilon}{c}{S}{d}$.

\begin{restatable}{lemma}{correctnessofflowtrees}
  \label{lem:correctness-of-flow-trees}
  It holds that $\summary{\#}(c) \geq d$
  for every node $\lnode{t}{c}{\#}{d}$ of a flow tree.
\end{restatable}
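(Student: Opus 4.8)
The plan is to prove the claim by bottom-up induction on the subtree rooted at the node $t$, that is, by induction on the height of that subtree, assuming the statement for all children of $t$. The base case is when $t$ is a leaf: here flow condition~2 gives exactly $d = \lout{t} \leq \summary{\lsym{t}}(\lin{t}) = \summary{\#}(c)$, so nothing further is needed. This also covers leaves labelled $\varepsilon$, since $\summary{\varepsilon}$ is the identity.

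For the inductive step, suppose $t$ is internal with children $t0, \ldots, tk$, so that the grammar contains the production $\# \pstep \#_0 \cdots \#_k$ with $\#_j \eqdef \lsym{tj}$. Write $c_j \eqdef \lin{tj}$ and $d_j \eqdef \lout{tj}$. The induction hypothesis, applied to each child, yields $\summary{\#_j}(c_j) \geq d_j$ for every $j$. I combine these using both halves of \cref{lem:summary-composition-and-propagation}. Since $\# \gstep{} \#_0 \cdots \#_k$ is in particular a derivation $\# \gstep{*} \#_0 \cdots \#_k$, part~2 gives $\summary{\#}(c) \geq \summary{\#_0 \cdots \#_k}(c)$, while part~1 rewrites the latter as the composite $(\summary{\#_k} \circ \cdots \circ \summary{\#_0})(c)$. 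It therefore suffices to show that this composite is at least $d$.

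To evaluate the composite I chain the inequalities from left to right, using that each summary function is monotone (nondecreasing): this is immediate from the definition, since enlarging $n$ enlarges the set over which the supremum is taken, and it also follows from \cref{rem:summary-monotonicity}. Setting $e_0 \eqdef \summary{\#_0}(c)$ and using $c_0 \leq c$ (flow condition~1), monotonicity and the induction hypothesis give $e_0 \geq \summary{\#_0}(c_0) \geq d_0 \geq c_1$. Inductively, having established $e_{j-1} \geq c_j$, I set $e_j \eqdef \summary{\#_j}(e_{j-1})$ and obtain $e_j \geq \summary{\#_j}(c_j) \geq d_j \geq c_{j+1}$, where the last step is again flow condition~1. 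At the final index this yields $e_k \geq d_k \geq d$ (using $\lout{t} \leq \lout{tk}$), and since $e_k = (\summary{\#_k} \circ \cdots \circ \summary{\#_0})(c)$, this is exactly the bound required to close the induction.

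I do not anticipate a genuine obstacle: the argument is a routine bottom-up propagation, and the flow conditions are designed precisely so that the input and output values telescope along the children. The only points needing care are getting the composition order right, as \cref{lem:summary-composition-and-propagation}(1) composes in the order opposite to concatenation, and recording plain monotonicity of $\summary{\cdot}$ before invoking it in the chain. The passage from $\#$ to its right-hand side via part~2 of the lemma is what lets the single-production case absorb the derivation and complete the step.
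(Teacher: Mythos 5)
Your proof is correct and follows essentially the same route as the paper: structural (bottom-up) induction on the flow tree, with the leaf case handled by the second flow condition and the internal case by combining both parts of \cref{lem:summary-composition-and-propagation} with monotonicity of summary functions to telescope the children's input/output values. No gaps.
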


\vspace{-1em} %
\begin{restatable}{lemma}{existenceofcompleteflowtrees}
  \label{lem:existence-of-complete-flow-trees}
  Let $S \in V$ and $c, d \in \setN$.
  If $\summary{S}(c) \geq d$ then there exists a complete flow tree
  with root $\lnode{\varepsilon}{b}{S}{e}$ such that
  $b \leq c$ and $e \geq d$.
\end{restatable}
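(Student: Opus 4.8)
The plan is to extract from the supremum defining $\summary{S}(c)$ a single firing sequence that witnesses coverability, and then to annotate its parse tree directly, reading off the input/output values from the intermediate configurations of the corresponding VAS run.

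First I would unfold the hypothesis. Since $\summary{S}(c) \geq d$ with $d \in \setN$ finite, the set $\{e' \mid \exists\, b \leq c : b \vstep{S} e'\}$ has supremum at least $d$ and is in particular nonempty; being a set of natural numbers, it contains some element $e \geq d$. Hence there are $b \leq c$ and $e \geq d$ with $b \vstep{S} e$, which by $\vstep{S} = \vstep{\lang{S}}$ means $b \vstep{z} e$ for some terminal word $z \in \lang{S}$. As $S \gstep{*} z$ with $z \in A^*$, this derivation is witnessed by a complete parse tree $(T, \lsymoperator)$ whose root is labeled $S$ and whose yield is $z$.

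Next I would define the annotation from the run. Let $t_1, \ldots, t_\ell$ be the leaves of $T$ in lexicographic order, so $z = \lsym{t_1} \cdots \lsym{t_\ell}$, and let $v_0 = b, v_1, \ldots, v_\ell = e$ be the successive configurations of the run $b \vstep{z} e$, where $v_i = v_{i-1} + a_i$ with $a_i = \lsym{t_i} \in A$ for a terminal leaf and $a_i = 0$ for an $\varepsilon$-leaf. All the $v_i$ lie in $\setN$ since the run is valid. Each node $t$ spans a contiguous block of leaves $t_{p+1}, \ldots, t_q$, and I would set $\lin{t} \eqdef v_p$ and $\lout{t} \eqdef v_q$; for a leaf $t_i$ this gives $\lin{t_i} = v_{i-1}$ and $\lout{t_i} = v_i$, and for the root $\varepsilon$ it gives $\lin{\varepsilon} = b$ and $\lout{\varepsilon} = e$. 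I would then verify the flow conditions, all of which hold, indeed with equality for the internal ones: if $t$ is internal with children $t0, \ldots, tk$, their blocks partition that of $t$ and are consecutive, so $\lin{t0} = \lin{t}$, $\lout{tk} = \lout{t}$, and $\lin{t(j+1)} = \lout{tj}$; for a leaf $t_i$, the single valid step $\lin{t_i} \vstep{\lsym{t_i}} \lout{t_i}$ witnesses $\lout{t_i} = v_i \leq \summary{\lsym{t_i}}(v_{i-1}) = \summary{\lsym{t_i}}(\lin{t_i})$. This produces a complete flow tree with root $\lnode{\varepsilon}{b}{S}{e}$ such that $b \leq c$ and $e \geq d$, as required.

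The construction is entirely direct, so there is no real obstacle; the only points requiring care are extracting a finite witness $e \geq d$ from a supremum that could a priori equal $+\infty$, and the bookkeeping of leaf blocks in the presence of $\varepsilon$-leaves (which contribute the zero action and so leave the current configuration unchanged). Neither poses a genuine difficulty, and together with \cref{lem:correctness-of-flow-trees} this establishes the announced equivalence between $\summary{S}(c) \geq d$ and the existence of a complete flow tree.
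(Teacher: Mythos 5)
Your proof is correct and follows essentially the same route as the paper's: extract a witness run $b \vstep{z} e$ from the supremum, take the complete parse tree of the derivation $S \gstep{*} z$, and annotate it with the intermediate configurations of the run. The paper's proof is terser (it simply asserts that the parse tree together with the run ``induces'' a complete flow tree), whereas you spell out the block-of-leaves annotation and check the flow conditions explicitly, which is exactly the intended construction.
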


We will need to compare flow trees.
Let the \emph{rank} of a flow tree
$(T, \lsymoperator, \linoperator, \loutoperator)$ be the pair
$(\card{T}, \sum_{t \in T} \lin{t} + \lout{t})$.
The lexicographic order $\lex$ over $\setN^2$ is used to compare ranks
of flow trees.
A complete flow tree $(T, \lsymoperator, \linoperator, \loutoperator)$ is
called \emph{optimal} if there exists no complete flow tree
$(T', \lsymoperator['], \linoperator['], \loutoperator['])$
of strictly smaller rank such that
$\lin[']{\varepsilon} \leq \lin{\varepsilon}$,
$\lsym{\varepsilon} = \lsym{\varepsilon}$, and
$\lout[']{\varepsilon} \geq \lout{\varepsilon}$.
Optimal flow trees enjoy the following important
properties, stated formally below.
Firstly, they are \emph{tight},
meaning that the inequalities in the first flow condition are in fact equalities.
Secondly, they are \emph{balanced},
meaning that the input value of each node is never too large compared
to its output value.

\begin{restatable}{lemma}{optimalimpliesequalitiesinflows}
  \label{lem:optimal-implies-equalities-in-flows}
  For every internal node $t$ in an optimal complete flow tree,
  we have
  $\lin{t0} = \lin{t}$,
  $\lin{t1} = \lout{t0}$,
  \ldots,
  $\lin{tk} = \lout{t(k-1)}$, and
  $\lout{t} = \lout{tk}$,
  where $t0, \ldots, tk$ are the children of $t$.
\end{restatable}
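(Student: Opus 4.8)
The plan is to argue by contradiction. Assuming the conclusion fails at some internal node, I would exhibit a complete flow tree of strictly smaller rank that still respects the root interface required for optimality, contradicting the assumed optimality. Fix an internal node $t$ with children $t0, \ldots, tk$, and suppose that at least one of the stated equalities fails, i.e. one of the inequalities of the first flow condition is strict. The crucial observation is that every flow inequality is oriented consistently: $\lin{t0}$ is bounded above by $\lin{t}$, each $\lin{t(j+1)}$ is bounded above by $\lout{tj}$, and $\lout{t}$ is bounded above by $\lout{tk}$. In each case the value on the larger side of a strict inequality can be lowered to meet the smaller side without breaking any other flow condition.

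Concretely, I would distinguish three cases. If $\lin{t0} < \lin{t}$, I lower $\lin{t}$ to $\lin{t0}$: the condition $\lin{t0} \leq \lin{t}$ becomes an equality, and since $\lin{t}$ occurs only on the \emph{small} side of whichever constraint the parent of $t$ imposes on it, that constraint is preserved, while no other condition mentions $\lin{t}$. If $\lin{t(j+1)} < \lout{tj}$ for some $j < k$, I lower $\lout{tj}$ to $\lin{t(j+1)}$: the only condition in which $\lout{tj}$ occurs on the large side is $\lin{t(j+1)} \leq \lout{tj}$ (now an equality), whereas the condition constraining $\lout{tj}$ from its own subtree (the leaf condition, or $\lout{tj} \leq \lout{(tj)\,\text{last child}}$) has $\lout{tj}$ on its small side and hence still holds. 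If $\lout{t} < \lout{tk}$, I lower $\lout{tk}$ to $\lout{t}$, which is entirely analogous. In every case the result is again a complete flow tree: the underlying parse tree is untouched, all values stay in $\setN$ (each new value equals a pre-existing non-negative quantity), and exactly one of the $\lin{\cdot}/\lout{\cdot}$ values strictly decreases.

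It then remains to compare ranks and to verify the root interface. Since the node set is unchanged, $\card{T}$ is unchanged, whereas $\sum_{s \in T} \lin{s} + \lout{s}$ strictly decreases, so the new rank is strictly smaller in the lexicographic order $\lex$. For the interface: lowering $\lin{t}$ can affect the root only when $t = \varepsilon$, in which case it decreases $\lin{\varepsilon}$, exactly as the optimality condition permits ($\lin[']{\varepsilon} \leq \lin{\varepsilon}$); lowering $\lout{tj}$ or $\lout{tk}$ modifies a proper descendant of the root and therefore leaves $\lin{\varepsilon}$, $\lout{\varepsilon}$, and the root label untouched. Hence the modified tree contradicts optimality, and so every one of the inequalities in question must in fact be an equality.

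The argument is almost entirely bookkeeping; its one point of genuine content is the uniform orientation of the flow inequalities, which is what guarantees that reducing the larger endpoint of a strict inequality never invalidates the constraint on the opposite side, be it at the parent or inside the relevant subtree. I note that monotonicity of summary functions (\Cref{rem:summary-monotonicity}) is not needed for this tightness result, since the only leaf condition ever involved has the modified output on its small side; monotonicity would, by contrast, be the key ingredient if one chose to tighten by raising values instead of lowering them.
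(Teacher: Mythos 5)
Your proposal is correct and follows essentially the same route as the paper: assume one of the flow inequalities at an internal node is strict, lower the value on its larger side to force equality, and observe that this yields a complete flow tree of strictly smaller rank with an admissible root interface, contradicting optimality. The paper only writes out the case $\lin{t0} < \lin{t}$ and declares the others analogous; your case analysis and the observation about the uniform orientation of the flow inequalities is exactly the content of that ``analogous''.
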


\vspace{-1em} %
\begin{restatable}{lemma}{optimalimpliesbalanced}
  \label{lem:optimal-implies-balanced}
  For every node $t$ in an optimal complete flow tree,
  it holds that
  $\lin{t} \leq \lout{t} + \degree^{\card{V}}$.
\end{restatable}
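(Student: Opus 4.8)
The plan is to argue by contradiction, exploiting optimality to perform a rank-reducing surgery whenever some node is badly unbalanced. Assume that $\lin{t} > \lout{t} + \degree^{\card{V}}$ for some node $t$, and write $g(s) \eqdef \lin{s} - \lout{s}$ for the gap of a node $s$. Using the equalities of \cref{lem:optimal-implies-equalities-in-flows}, the flows of the children of any internal node $s$ with children $s0,\ldots,sk$ telescope: $g(s) = \sum_{l=0}^{k} g(sl)$, since $\lin{s0}=\lin{s}$, $\lin{s(l+1)}=\lout{sl}$, and $\lout{sk}=\lout{s}$. Iterating down to the leaves yields $g(t) = \sum_{u} g(u)$, the sum ranging over the leaves $u$ of the subtree rooted at $t$. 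At a leaf $u$ with $\lsym{u}=a\in A\cup\{\varepsilon\}$ the second flow condition gives $\lout{u} \le \summary{a}(\lin{u}) = \lin{u}+a$, so $g(u) \ge -a \ge -1$; moreover a short exchange argument shows that optimality forces each leaf to be output-maximal, i.e.\ $\lout{u}=\lin{u}+a$, because otherwise the slack could be propagated backwards through the (tight) flow to lower several input and output values simultaneously, strictly decreasing the rank while preserving the root data. Hence every leaf gap lies in $\{-1,0,1\}$, and from $g(t) > \degree^{\card{V}}$ we obtain that the subtree rooted at $t$ has strictly more than $\degree^{\card{V}}$ leaves.

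First I would turn this abundance of leaves into a long path. A tree of height $h$ in a grammar of degree $\degree$ has at most $\degree^{h}$ leaves, so the subtree rooted at $t$ has height strictly greater than $\card{V}$. Fix a path from $t$ down to a leaf witnessing this height; it visits more than $\card{V}$ internal nodes, all labelled by nonterminals. By the pigeonhole principle two of them, say $t_i \prefix t_j$ with $t_i$ a proper ancestor of $t_j$, carry the same nonterminal $X$. This repeated nonterminal is exactly what the exponential bound $\degree^{\card{V}}$ is calibrated to produce.

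The final step is to cut out the context between the two occurrences of $X$: replace the subtree rooted at $t_i$ by the subtree rooted at $t_j$, keeping the rest of the tree untouched. The result is again a parse tree for $G$, and it has strictly fewer nodes; so if it is a valid complete flow tree with the same root data, its strictly smaller rank contradicts optimality. Validity at the parent of $t_i$ holds precisely when the grafted subtree does not increase the input and does not decrease the output, i.e.\ when $\lin{t_j} \le \lin{t_i}$ and $\lout{t_j} \ge \lout{t_i}$ (the unchanged surrounding flows then still satisfy the flow conditions, by monotonicity, and the root data is preserved). Securing these two inequalities is the crux of the proof and the step I expect to be the main obstacle: a repeated nonterminal exists by counting, but nothing a priori guarantees that the deeper occurrence has smaller input and larger output. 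I would obtain such a well-oriented repetition by choosing the descent carefully rather than arbitrarily --- for instance following the rightmost spine, along which \cref{lem:optimal-implies-equalities-in-flows} keeps the output constantly equal to $\lout{t}$ while the input must fall from the huge value $\lin{t}$ at $t$ to a value at most $\lout{t}+1$ at the bottom leaf, so that some same-labelled ancestor/descendant pair on this spine has non-increasing input and (constant, hence non-decreasing) output --- recursing into a maximal-gap left sibling whenever the rightmost spine is too short to exhibit a repetition on its own. Combining this oriented cut with the counting of the first two paragraphs then drives the rank strictly down, completing the contradiction.
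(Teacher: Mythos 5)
Your first paragraph is sound and matches the first half of the paper's argument: optimality forces $\lin{u} \leq \lout{u}+1$ at every leaf $u$ (otherwise lower that leaf's input), and telescoping the tight equalities of \cref{lem:optimal-implies-equalities-in-flows} gives $\lin{t}-\lout{t} \leq \ell$ where $\ell$ is the number of leaves below $t$, hence $\ell > \degree^{\card{V}}$ and the subtree below $t$ is not elementary. The gap is exactly where you place it: the surgery. Your graft needs a same-labelled pair $s \pprefix s'$ with $\lin{s'} \leq \lin{s}$ and $\lout{s'} \geq \lout{s}$, and the rightmost-spine argument does not deliver one. On that spine all outputs equal $\lout{t}$, so pigeonhole gives a repeated nonterminal, but nothing forces the deeper occurrence to have the smaller input: the inputs need not fall monotonically, and a spine labelled $X,Y,X,Y$ with inputs $100,1,101,2$ ending in a leaf of small input has every same-labelled pair oriented the wrong way even though the input ``falls overall''. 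Worse, in an optimal tree every same-labelled pair on the rightmost spine \emph{must} be wrongly oriented, since a well-oriented pair would already contradict optimality by your own graft; so the pair you are hunting for provably does not exist there, and the contradiction has to come from elsewhere. The ``maximal-gap left sibling'' recursion is too vague to close this.

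The paper escapes by not grafting within the existing labelling at all. It collapses \emph{all} repeated pairs, regardless of flow orientation, obtaining a complete \emph{elementary} parse tree with the same root nonterminal and strictly fewer nodes; such a tree has at most $\degree^{\card{V}}$ leaves, so its yield $w'$ has $\len{w'} \leq \degree^{\card{V}}$, and it can be relabelled \emph{from scratch} with root input $\lout{t}+\degree^{\card{V}} < \lin{t}$ and root output at least $\lout{t}$ (the run on $w'$ stays nonnegative and loses at most $\degree^{\card{V}}$). Since optimality only compares root data, this smaller-rank tree yields the contradiction. That relabel-from-scratch step is the idea missing from your proposal.
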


Next, we show how to truncate flow trees while
preserving enough information to decide that the $\linoperator$ and $\loutoperator$
labelings satisfy the flow conditions.
Our truncation is justified by the following lemma.

\begin{restatable}{lemma}{summaryforinfiniteratio}
  \label{lem:summary-for-infinite-ratio}
  Let $X \in V$ and $n \in \setN$.
  If $\ratio{X} = +\infty$
  and there is a derivation $X \gstep{*} u X v$ such that
  $\summary{u}(n) > n$,
  then
  it holds that $\summary{X}(n) = +\infty$.
\end{restatable}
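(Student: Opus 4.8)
The plan is to exploit the pumping derivation $X \gstep{*} uXv$ by iterating it and then fighting the possibly counter-decreasing suffix $v$ with the super-linear growth guaranteed by $\ratio{X} = +\infty$. First I would iterate the derivation to obtain $X \gstep{*} u^k X v^k$ for every $k \in \setN$, and apply \cref{lem:summary-composition-and-propagation} (propagation to pass from $X$ to $u^k X v^k$, then composition to split the product, using $\summary{u^k} = (\summary{u})^k$ and $\summary{v^k} = (\summary{v})^k$) to obtain the key inequality
\[
  \summary{X}(n) \ \geq \ (\summary{v})^k\bigl(\summary{X}((\summary{u})^k(n))\bigr)
  \qquad \text{for all } k \in \setN .
\]
The strategy is then to let $k \to \infty$ and show that the right-hand side is unbounded.

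Next I would control the innermost argument. From the hypothesis $\summary{u}(n) > n$, hence $\summary{u}(n) \geq n+1$, together with \cref{rem:summary-monotonicity} (which gives both monotonicity of $\summary{u}$ and the inequality $\summary{u}(m+e) \geq \summary{u}(m)+e$), a routine induction yields $(\summary{u})^k(n) \geq n+k$, so this argument tends to $+\infty$ with $k$. Feeding this into the middle factor, the assumption $\ratio{X} = +\infty$ means that for every constant $C$ there is a threshold beyond which $\summary{X}(m) \geq C\,m$; thus for all large $k$ the quantity $B_k \eqdef \summary{X}((\summary{u})^k(n))$ satisfies $B_k \geq C(n+k) \geq C\,k$, i.e.\ it grows at least linearly in $k$ with slope as large as we wish.

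The main obstacle is the outer factor $(\summary{v})^k$: since $\displ{v}$ may be negative, each application of $\summary{v}$ can decrease its argument, and a priori $k$ applications could consume all of $B_k$. I would resolve this by observing that $\summary{v}$ loses at most a fixed constant per step. Concretely, since every nonterminal is productive there is some $x_0$ with $\summary{v}(x_0) > -\infty$, and then \cref{rem:summary-monotonicity} gives the slope-one lower bound $\summary{v}(x) \geq x + s_v$ for all $x \geq x_0$, where $s_v \eqdef \summary{v}(x_0) - x_0 \in \setZ$ is a constant. Choosing $C > |s_v|$ makes $B_k$ large enough that the whole orbit stays above $x_0$, so a short induction on the number of iterations gives $(\summary{v})^k(B_k) \geq B_k - k\,|s_v| \geq (C - |s_v|)\,k$.

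Combining the three bounds, for all sufficiently large $k$ we get $\summary{X}(n) \geq (C - |s_v|)\,k$, which is unbounded in $k$; hence $\summary{X}(n) = +\infty$, as required. The only delicate point is the bookkeeping ensuring that each intermediate iterate $(\summary{v})^j(B_k)$ remains above the threshold $x_0$, and this is exactly where $\ratio{X} = +\infty$ is essential: it guarantees that $B_k$ outpaces the worst-case linear loss $k\,|s_v|$ incurred by the suffix.
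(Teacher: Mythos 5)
Your proposal is correct and follows essentially the same route as the paper's proof: iterate to $X \gstep{*} u^k X v^k$, use composition/propagation to get $\summary{X}(n) \geq (\summary{v})^k \circ \summary{X} \circ (\summary{u})^k(n)$, push the inner argument up to $n+k$, invoke $\ratio{X}=+\infty$ to make $\summary{X}$ grow with a slope beating the constant per-step loss of $\summary{v}$, and conclude. The paper packages the loss of $v$ as a $b$ with $\summary{v}(b)\geq 0$ (so $(\summary{v})^k(m+kb)\geq m$) and takes $\lambda=b+1$, whereas you use the offset $s_v=\summary{v}(x_0)-x_0$ and choose $C>|s_v|$ --- a cosmetic difference only.
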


\begin{definition}[Certificates]
  \label{def:certificate}
  A \emph{certificate} is a flow tree
  $(T, \lsymoperator, \linoperator, \loutoperator)$
  in which every leaf $t$ with $\ratio{\lsym{t}} = +\infty$
  has a proper ancestor $s \pprefix t$ such that
  $\lsym{s} = \lsym{t}$ and $\lin{s} < \lin{t}$.
\end{definition}

Notice that every complete flow tree is a certificate.
We now prove the existence of small certificates.
Let $S \in V$ and $c, d \in \setN$ such that $\summary{S}(c) \geq d$.
We introduce the set $\mathscr{T}$ of all complete flow trees with
root $\lnode{\varepsilon}{b}{S}{e}$ satisfying $b \leq c$ and $e \geq d$.
By \cref{lem:existence-of-complete-flow-trees},
the set $\mathscr{T}$ is not empty.
Let us pick $(T, \lsymoperator, \linoperator, \loutoperator)$ in
$\mathscr{T}$ among those of least rank. %
By definition,
the root $\varepsilon$ of $T$ satisfies
$\lin{\varepsilon} \leq c$ and $\lout{\varepsilon} = d$.
Notice that the complete flow tree $T$ is optimal.
Let us introduce the set $U$ of all nodes $t \in T$ such that every
proper ancestor $s \pprefix t$ satisfies the following condition:
\begin{equation}
  \label{eq:truncation}
  \text{For every ancestor} \ r \prefix s, \ 
  \lsym{r} = \lsym{s} \implies \lin{r} \geq \lin{s}
\end{equation}
By definition,
the set $U$ is a nonempty and prefix-closed subset of $T$.
The following fact derives from \cref{lem:correctness-of-flow-trees}
and the property that $T$ is a complete flow tree.

\begin{restatable}{fact}{truncationiscertificate}
  The tree $U$,
  equipped with the restrictions to $U$ of the functions
  $\lsymoperator$, $\linoperator$ and $\loutoperator$,
  is a certificate.
\end{restatable}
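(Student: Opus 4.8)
The plan is to verify two things about $U$ equipped with the restricted labelings: first, that it is a flow tree, and second, that it meets the extra requirement of \cref{def:certificate}. Everything follows from a structural observation about how the truncation acts on children. Fix a node $t \in U$ that is internal in $T$, with children $t0, \ldots, tk$. A child $ti$ lies in $U$ exactly when all of its proper ancestors satisfy \eqref{eq:truncation}; but the proper ancestors of $ti$ are precisely $t$ together with the proper ancestors of $t$, and the latter already satisfy \eqref{eq:truncation} because $t \in U$. Hence $ti \in U$ if and only if $t$ itself satisfies \eqref{eq:truncation}, a condition independent of $i$. So each internal node of $T$ that survives in $U$ either keeps all of its children or none of them. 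This all-or-nothing behaviour shows that $U$ is a tree and that its internal nodes are exactly those $t \in U$ that are internal in $T$ and satisfy \eqref{eq:truncation}; for such a node all children are retained with unchanged labels, so the production rule $\lsym{t} \pstep \lsym{t0} \cdots \lsym{tk}$ is preserved and $U$ is a valid parse tree.

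Next I would check the flow conditions. For an internal node of $U$, the first flow condition involves only that node and its children, all retained in $U$ with their original $\linoperator$ and $\loutoperator$ values, so the inequalities are inherited verbatim from $T$. The leaves of $U$ are of two kinds: those that were already leaves of $T$, for which the second flow condition is inherited directly; and those that were internal in $T$ but lost their children to the truncation. For a leaf $t$ of the latter kind I would invoke \cref{lem:correctness-of-flow-trees}: since $\lnode{t}{\lin{t}}{\lsym{t}}{\lout{t}}$ is a node of the flow tree $T$, we obtain $\lout{t} \leq \summary{\lsym{t}}(\lin{t})$, which is exactly the second flow condition. Thus $U$, with the restricted labelings, is a flow tree.

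It then remains to see that $U$ is a certificate, i.e. that every leaf $t$ of $U$ with $\ratio{\lsym{t}} = +\infty$ has a proper ancestor $s \pprefix t$ with $\lsym{s} = \lsym{t}$ and $\lin{s} < \lin{t}$. A leaf of $U$ that was already a leaf of $T$ carries a terminal or $\varepsilon$ label, whose summary is $n \mapsto n$ or $n \mapsto n \pm 1$ and whose ratio is therefore $1$; so the hypothesis $\ratio{\lsym{t}} = +\infty$ rules out this case. Hence such a $t$ is an internal node of $T$ that was truncated, meaning $t \in U$ while $t$ itself fails \eqref{eq:truncation}. The negation of \eqref{eq:truncation} yields an ancestor $r \prefix t$ with $\lsym{r} = \lsym{t}$ and $\lin{r} < \lin{t}$; since $\lin{t} < \lin{t}$ is impossible we have $r \neq t$, so $r$ is the required proper ancestor.

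I expect the main obstacle to be the bookkeeping in the first paragraph: correctly establishing the all-or-nothing behaviour of the truncation is what makes $U$ a parse tree at all, since otherwise a newly created leaf could sit at a node whose production rule has been broken. Once that is in place, the transfer of the flow conditions (using \cref{lem:correctness-of-flow-trees} for the new leaves) and the identification of the certificate witness as a consequence of the failure of \eqref{eq:truncation} are both routine.
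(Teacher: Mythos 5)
Your proof is correct and follows essentially the same route as the paper's: the flow conditions for $U$ are inherited from $T$ via \cref{lem:correctness-of-flow-trees}, and an infinite-ratio leaf of $U$ must have been internal in $T$, so its missing children force $t$ itself to violate \eqref{eq:truncation}, which yields the required proper ancestor. Your explicit all-or-nothing analysis of which children survive the truncation is a detail the paper leaves implicit, but it does not change the argument.
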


Our next step is to bound the height of $U$ as well as
the input and output values of its nodes.
We will use the following properties,
that are easily derived from the definition of $U$,
the optimality of $T$,
and \cref{lem:optimal-implies-balanced,lem:optimal-implies-equalities-in-flows}.

\begin{restatable}{fact}{descentinequations}
  \label{fact:descent-inequations}
  Let $r$ and $s$ be nodes in $U$
  such that $r \pprefix s$.
  \begin{enumerate}
  \item
    If $s$ is internal in $U$ and $\lsym{r} = \lsym{s}$ then $\lout{s} < \lout{r}$, and
  \item
    If $s$ is a child of $r$ then $\lout{s} \leq \lout{r} + (\degree - 1) \degree^{\card{V}}$.
  \end{enumerate}
\end{restatable}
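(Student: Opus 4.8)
The plan is to prove both items by examining what the definition of $U$ and optimality of $T$ force along the chain from $r$ to $s$. For item~(1), since $r \pprefix s$ and $s$ is internal in $U$, I would first observe that $s$ lies in $U$, so $r$ is a proper ancestor satisfying the truncation condition~\eqref{eq:truncation}; in particular, taking the ancestor $r \prefix r$ in that condition with $\lsym{r} = \lsym{s}$ gives $\lin{r} \geq \lin{s}$. The key mechanism is a strict decrease: I want to show that when two ancestor-related nodes carry the same nonterminal label, their output values strictly decrease going downward. I would argue as follows. Because $T$ is optimal, \cref{lem:optimal-implies-balanced} bounds input in terms of output, and \cref{lem:optimal-implies-equalities-in-flows} turns the flow inequalities into equalities, so inputs and outputs propagate exactly through internal nodes. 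I would then use these exact propagations together with $\lin{r} \geq \lin{s}$ to relate $\lout{r}$ and $\lout{s}$. The reason a strict inequality $\lout{s} < \lout{r}$ should emerge is that $s$ being internal in $U$ (hence having children in $U$) means the subtree between $r$ and $s$ consumes at least one production step that, combined with tightness and $\lsym{r} = \lsym{s}$, prevents $\lout{r}$ and $\lout{s}$ from coinciding; if they did coincide one could splice out the segment between $r$ and $s$ to produce a complete flow tree of strictly smaller rank with the same root data, contradicting optimality of $T$.

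For item~(2), where $s$ is a child of $r$, I would work purely locally at the internal node $r$. By tightness (\cref{lem:optimal-implies-equalities-in-flows}), writing $r0, \ldots, rk$ for the children of $r$, we have $\lin{r0} = \lin{r}$, $\lin{r(j+1)} = \lout{rj}$, and $\lout{r} = \lout{rk}$. Thus the output of the last child equals $\lout{r}$, and each earlier child's output becomes the next child's input. To bound $\lout{s}$ for an arbitrary child $s = rj$, I would combine these equalities with \cref{lem:optimal-implies-balanced}, which gives $\lin{rj} \leq \lout{rj} + \degree^{\card{V}}$, and with monotonicity (\cref{rem:summary-monotonicity}) applied across the at most $\degree - 1$ steps separating child $j$ from the last child $k$. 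Concretely, climbing from $rj$ to $rk$ through the tight chain, each step changes the tracked value in a controlled way, and the accumulated slack over the at most $\degree - 1$ intervening children, each contributing at most $\degree^{\card{V}}$ via the balance bound, yields $\lout{rj} \leq \lout{rk} + (\degree - 1)\degree^{\card{V}} = \lout{r} + (\degree - 1)\degree^{\card{V}}$.

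The main obstacle will be item~(1): establishing the \emph{strict} decrease rather than a weak one. A weak inequality $\lout{s} \leq \lout{r}$ follows readily from tightness and $\lin{r} \geq \lin{s}$, but strictness requires exploiting optimality through a cut-and-paste argument. The delicate point is to check that replacing the portion of $T$ hanging below $r$ by the portion hanging below $s$ (both rooted at a node labeled $\lsym{r} = \lsym{s}$) yields a genuine complete flow tree whose root data still satisfies $\lin{} \leq c$ and $\lout{} \geq d$, and whose rank is strictly smaller in the lexicographic order on $(\card{T}, \sum \lin{} + \lout{})$ --- here the strict drop in the number of nodes, or in the sum of labels when $\lout{s} = \lout{r}$ would otherwise hold, is what forces the contradiction. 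I would take care that the spliced tree respects the flow conditions at the grafting point, which is where $\lin{s} \leq \lin{r}$ and the balance lemma are used to guarantee the input feeding into the grafted subtree remains admissible.
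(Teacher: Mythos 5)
Your proposal follows essentially the same route as the paper: item~(2) is exactly the paper's argument (tightness from \cref{lem:optimal-implies-equalities-in-flows} plus the balance bound of \cref{lem:optimal-implies-balanced}, iterated over the at most $\degree - 1$ children separating $s$ from the last child of $r$), and item~(1) rests, as in the paper, on the truncation condition together with a cut-and-paste contradiction with optimality. Two points need correcting, though. First, to obtain $\lin{r} \geq \lin{s}$ you must apply \cref{eq:truncation} at the node $s$, taking $r$ as the ancestor in that condition; this is licensed precisely because $s$ is internal in $U$ and hence a proper ancestor of some node of $U$. Applying the condition at $r$, as you write, only compares $r$ with its \emph{own} ancestors and yields nothing about $s$. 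Second, your claim that the weak inequality $\lout{s} \leq \lout{r}$ ``follows readily from tightness'' is false: tightness propagates values exactly through the context of the derivation $X \gstep{*} u X v$ sitting between $r$ and $s$, and nothing prevents the part corresponding to $v$ from decreasing the counter, so a priori $\lout{s} > \lout{r}$ is possible even in a tight tree. This is not fatal, because the splice you describe works verbatim under the single hypothesis $\lout{s} \geq \lout{r}$: the subtree rooted at $s$ has input $\lin{s} \leq \lin{r}$, the same label, output $\lout{s} \geq \lout{r}$, and strictly fewer nodes, so grafting it in place of the subtree rooted at $r$ contradicts optimality (every subtree of an optimal complete flow tree being optimal). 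You should simply run that contradiction from the assumption $\lout{s} \geq \lout{r}$ rather than first asserting the weak inequality and only splicing in the equality case.
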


Consider a leaf $t$ in $U$.
For each $i$ in $\{0, \ldots, \len{t}\}$,
let $t_i$ denote the unique prefix $t_i \prefix t$ with
length $\len{t_i} = i$,
and let $(\#_i, d_i) = (\lsym{t_i}, \lout{t_i})$.
Note that $d_0 = \lout{\varepsilon} = d$.
\cref{fact:descent-inequations} entails that
for every $i, j$ with $0 \leq i, j < \len{t}$,
\begin{equation}
  \label{eq:certificate-branch-extraction}
  d_{i+1} \leq d_i + \degree^{\card{V}+1}
  \qquad
  \text{and}
  \qquad
  (i < j \,\wedge\, \#_i = \#_j) \implies d_i > d_j
\end{equation}
Let $m_i = \max \{d_0, \ldots, d_i\}$ for all $i \in \{0, \ldots, \len{t}\}$.
According to \cref{eq:certificate-branch-extraction},
increasing pairs $m_i < m_{i+1}$ may occur in the sequence
$m_0, \ldots, m_{\len{t}}$ only when
$\#_{i+1} \not\in \{\#_0, \ldots, \#_i\}$ or $i+1 = \len{t}$.
So there are at most $\card{V}$ such increasing pairs.
Moreover, for each increasing pair $m_i < m_{i+1}$,
the increase $m_{i+1} - m_i$ is bounded by $\degree^{\card{V}+1}$.
We derive that
$d_i \leq m_{\len{t}} \leq d + \card{V} \cdot \degree^{\card{V}+1} < d + \degree^{2\card{V}+1}$
for all $i$ with $0 \leq i \leq \len{t}$,
since $\delta \geq 2$ by assumption.
It follows from \cref{eq:certificate-branch-extraction}
that each nonterminal in $V$ appears at most
$d + \degree^{2\card{V}+1}$ times in the sequence
$(\#_i)_{0 \leq i < \len{t}}$.
By the pigeonhole principle,
we get that
$\len{t} \leq \card{V} \cdot (d + \degree^{2\card{V}+1})$.
We have thus shown that for every node $t \in U$,
\begin{equation}
  \label{eq:certificate-bounds}
  \len{t} \leq d \cdot \card{V} + \degree^{3\card{V}+1}
  \qquad\text{and}\qquad
  \lin{t} + \lout{t} \leq 2d + \degree^{2\card{V}+3}
\end{equation}
This concludes the proof of the ``only if'' direction
of the following proposition.
The ``if'' direction follows from \cref{lem:correctness-of-flow-trees},
since every certificate is a flow tree.

\begin{proposition}
  \label{prop:certificate-bounds}
  For every $S \in V$ and $c, d \in \setN$,
  it holds that $\summary{S}(c) \geq d$ if, and only if,
  there exists a certificate with root $\lnode{\varepsilon}{b}{S}{d}$
  for some $b \leq c$
  and whose nodes $t$ satisfy \cref{eq:certificate-bounds}.
\end{proposition}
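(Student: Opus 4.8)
The plan is to establish both directions by carefully tracking what the preceding development already gives us. Looking at the statement, almost all of the ``only if'' direction has in fact been carried out in the running text immediately above the proposition: starting from $\summary{S}(c) \geq d$, we invoked \cref{lem:existence-of-complete-flow-trees} to obtain a nonempty set $\mathscr{T}$ of complete flow trees, picked one of least rank (hence optimal), truncated it to the prefix-closed set $U$ defined by the condition \eqref{eq:truncation}, verified via \cref{fact:truncationiscertificate} that $U$ is a certificate, and then used \cref{fact:descent-inequations} together with a pigeonhole argument to derive the bounds \eqref{eq:certificate-bounds}. So my first task is simply to assemble these pieces into a clean statement: the certificate we produced has root $\lnode{\varepsilon}{b}{S}{d}$ with $b = \lin{\varepsilon} \leq c$ and $\lout{\varepsilon} = d$, and all its nodes satisfy \eqref{eq:certificate-bounds}. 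This settles the forward direction.

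For the ``if'' direction, I would argue as the excerpt suggests: suppose a certificate with root $\lnode{\varepsilon}{b}{S}{d}$ and $b \leq c$ exists. Since every certificate is in particular a flow tree, \cref{lem:correctness-of-flow-trees} applies to its root, yielding $\summary{S}(b) \geq d$. Then monotonicity of the summary function (\cref{rem:summary-monotonicity}, or equivalently the fact that $\summary{S}$ is nondecreasing, which follows since $c' \leq c$ widens the existential in the definition of $\summary{S}$) gives $\summary{S}(c) \geq \summary{S}(b) \geq d$ from $b \leq c$. The node bounds \eqref{eq:certificate-bounds} play no role in this direction; they are merely part of the existential claim that is already guaranteed by the forward direction, so the biconditional is properly formed.

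The genuinely delicate point, and the one I would be most careful to get right, is the height bound rather than the value bound. The value bound $\lin{t} + \lout{t} \leq 2d + \degree^{2\card{V}+3}$ follows fairly directly from balancedness (\cref{lem:optimal-implies-balanced}) once the output values are controlled, since $\lin{t} \leq \lout{t} + \degree^{\card{V}}$ and $\lout{t} < d + \degree^{2\card{V}+1}$. The height bound is where the argument must be watched: one shows that along any root-to-leaf branch the running maximum $m_i = \max\{d_0, \dots, d_i\}$ can strictly increase at most $\card{V}$ times (each strict increase forces a genuinely new nonterminal to appear, by the strict-descent part of \eqref{eq:certificate-branch-extraction} which comes from the certificate truncation condition \eqref{eq:truncation}), each increase is at most $\degree^{\card{V}+1}$, and hence every $d_i$ stays below $d + \degree^{2\card{V}+1}$; the strict-descent property then forces each nonterminal to recur at most $d + \degree^{2\card{V}+1}$ times along the branch, bounding its length via pigeonhole. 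I would double-check that the crude arithmetic rounding to $d \cdot \card{V} + \degree^{3\card{V}+1}$ in \eqref{eq:certificate-bounds} is actually valid under the standing assumption $\degree \geq 2$, as this is the step most prone to an off-by-a-power slip.
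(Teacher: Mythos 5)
Your proposal is correct and follows essentially the same route as the paper: the ``only if'' direction is exactly the running-text construction (optimal complete flow tree of least rank, truncation to $U$ via \cref{eq:truncation}, the descent inequalities and pigeonhole argument yielding \cref{eq:certificate-bounds}), and the ``if'' direction is the paper's one-line appeal to \cref{lem:correctness-of-flow-trees} plus monotonicity of $\summary{S}$. Your closing sanity checks on the arithmetic (using $\degree \geq 2$, hence $\card{V} \leq \degree^{\card{V}}$, to absorb the constants into $\degree^{3\card{V}+1}$ and $\degree^{2\card{V}+3}$) are exactly the steps the paper glosses over, and they go through.
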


The above proposition leads to a simple procedure
to solve the coverability problem,
as we only need to enumerate finitely many potential certificates.
Checking whether an annotated parse tree is
a certificate reduces to
(a) the question whether a given nonterminal $X$ has an infinite ratio, and
(b) the coverability question $\summary{X}(c) \geq d$
for nonterminals $X$ with finite ratio.
Both questions will be shown to be decidable in \cref{sec:ratios}
by reduction to the subclass of thin GVAS,
which is the focus of the next section.

\section{Semilinearity of the Step Relations for Thin GVAS}
\label{sec:thin}
We turn to reachability relations in a particular subclass of GVAS called \emph{thin}.
A context-free grammar is said to be \emph{thin}\footnote{%
  Thinness entails that for any derivation $S \gstep{*} w$,
  the number of nonterminals in $w$ is bounded by $\degree^{\card{V}}$.
  This entails that parse trees of thin GVAS are of bounded width.
  Thin GVAS are thus a subclass of the \emph{finite-index}
  grammars of \cite{AG2011}.
}
if $\alpha \in A^* V A^*$
for every production rule $X \pstep \alpha$ such that
$X$ is derivable from $\alpha$.
Recall that \emph{Presburger arithmetic} is the first-order theory of
the natural numbers with addition.
It is well-known that \emph{semilinear sets} coincide with
the sets definable in Presburger arithmetic~\cite{Ginsburg:1966:PACIF}.

\begin{theorem}\label{thm:thin}
  For every nonterminal symbol $S$ of a thin GVAS,
  the relation $\vstep{S}$ is effectively definable in Presburger arithmetic.
\end{theorem}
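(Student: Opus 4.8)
The plan is to analyze the structure of derivations in a thin GVAS and show that the step relation $\vstep{S}$ can be built up compositionally from finitely many ``building blocks,'' each of which is Presburger-definable, using operations that preserve Presburger-definability. The key structural fact is thinness: in any derivation $S \gstep{*} w$, at most one nonterminal is ever ``active'' in the sense of being derivable from itself, so the set of sentential forms reachable has bounded width ($\leq \degree^{\card{V}}$ nonterminals at any time, as noted in the footnote). This means a derivation essentially follows a single recursive spine. First I would make precise the recursion structure: because $\alpha \in A^* V A^*$ whenever $X$ is derivable from $\alpha$, any such recursive production has the shape $X \pstep u\, Y\, v$ with $u, v \in A^*$, and the derivation from $X$ proceeds by pumping a linear chain of such rules before eventually applying a terminating (non-self-derivable) rule.

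The central idea is to set up a system of relations, one relation $R_X \subseteq \setN \times \setN$ for each nonterminal $X$, where $R_X = \vstep{X}$, and show this system is the least solution of a fixpoint equation whose right-hand side is a Presburger-definable operator. Concretely, for each nonterminal I would write $\vstep{X}$ as a finite union over production rules $X \pstep \alpha$ of compositions $\vstep{\alpha}$, and then expand $\vstep{\alpha}$ for $\alpha = \beta_0 X_1 \beta_1 \cdots X_\ell \beta_\ell$ using part~1 of \cref{lem:summary-composition-and-propagation} as a composition of the terminal segments (each a fixed shift on $\setN$, hence trivially Presburger) and the relations $\vstep{X_i}$. Because the grammar is thin, the self-recursive rules contribute only a single recursive call on the same spine, so the recursion ``unfolds'' into iterating a single Presburger-definable transduction. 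The heart of the argument is then to show that iterating such a relation an arbitrary number of times stays within Presburger arithmetic. For a recursive rule $X \pstep u\,X\,v$ with displacements $\displ{u}, \displ{v}$ finite (guaranteed once ratios are controlled), applying it $n$ times subtracts roughly $n|\displ{u}|$ going down and adds $n\displ{v}$ coming back up, subject to the nonnegativity constraint that the counter never drops below zero; the set of $(c,d)$ reachable by exactly $n$ iterations is defined by a formula linear in $n$, and existentially quantifying $n$ keeps it Presburger.

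I would formalize the unbounded iteration via the classical result that if $R$ is a Presburger-definable relation on $\setN$ whose iteration $R^*$ is known to be Presburger-definable (for instance when $R$ is a ``monotone'' or ``counter-like'' relation definable by a bounded-difference constraint), then the whole fixpoint lands in Presburger arithmetic. The cleanest route is to argue that the spine of a thin derivation behaves like a one-counter pushdown process with a single stack symbol, so the reachability relation is exactly that of a one-dimensional pushdown automaton (equivalently, a context-free language over $\{-1,0,1\}$ tracking the counter), and such relations are semilinear by Parikh's theorem combined with the nonnegativity constraints. This is where I would lean on the thinness footnote: bounded width means the parse trees have bounded branching away from the spine, so the off-spine subtrees contribute only finitely many ``summary shifts'' that can be precomputed, and the only genuinely unbounded phenomenon is the repetition along the single spine.

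The main obstacle I expect is handling the nonnegativity (the counter must stay in $\setN$ throughout, not just at the endpoints), which is precisely what makes the iteration nontrivial: a naive displacement calculation ignores that intermediate configurations may underflow. Controlling this requires tracking the minimal counter value reached during a subderivation, and showing that this minimum, as a function of the input $c$, is itself piecewise-linear and hence captured in Presburger arithmetic. I would address it by augmenting each relation with the extra guard ``the run stays nonnegative,'' computing for each recursive rule the worst-case dip as an affine function of the iteration count $n$, and encoding the resulting threshold condition (the input must be large enough to absorb $n$ descents) directly in the Presburger formula. Bounding this dip uniformly, and proving that only finitely many ``profiles'' of intermediate minima can arise along a thin spine, is the crux; once that is established, existential quantification over the iteration counts yields an explicit Presburger formula for $\vstep{S}$, and effectiveness follows because all the constituent shifts and thresholds are computable from $G$.
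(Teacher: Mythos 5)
Your proposal correctly isolates the two structural features that matter (the single recursive spine guaranteed by thinness, and the nonnegativity of intermediate counter values), but it is missing the one idea that actually makes the proof work, and it leans on a claim that the paper's own examples refute. You assert that the spine ``behaves like a one-counter pushdown process,'' so that the reachability relation is ``that of a context-free language over $\{-1,0,1\}$ tracking the counter'' and hence semilinear ``by Parikh's theorem combined with the nonnegativity constraints.'' That statement is false in general: a $1$-GVAS \emph{is} exactly a context-free language over $\{-1,0,1\}$ tracking a counter, and the paper's Ackermann example shows such relations can be wildly non-semilinear. Parikh's theorem controls the multiset of letters, not the order, and the nonnegativity constraints are order-sensitive; this is precisely the difficulty the whole paper is about. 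Likewise, your reduction of the iteration to ``a single Presburger-definable transduction'' whose $n$-fold composition has an affine worst-case dip only works for one fixed loop $X \pstep u\,X\,v$; the spine of a thin derivation is an arbitrary path through a finite graph of distinct rules $X \pstep a\,Y\,b$, and the interleaving of loops with different displacement profiles is not captured by a single affine formula in $n$. Your appeal to ``the classical result that if $R$ is a Presburger relation whose iteration is known to be Presburger, then the fixpoint is Presburger'' is circular at exactly the point where the work has to happen.

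The missing idea is that the descent and the ascent along the spine must be tracked \emph{simultaneously}: the $i$-th rule on the spine contributes $a_i$ to the downward counter and $b_i$ to the upward counter (read in reverse), and both runs must stay nonnegative. Pairing these as actions $(a_i,-b_i)$ turns the spine into a run of a $2$-dimensional VAS from $(c,d)$ to $(c',d')$, restricted to the regular language of spine paths, with the link $c' \vstep{\redset{X}} d'$ supplied by the exit rule (handled by induction on $\card{V}$ after normalizing to simple form). Effective semilinearity then comes from \cref{thm:acc}, the Leroux--Sutre theorem on $2$-VAS reachability under regular restriction, which is a genuinely nontrivial external result, not a routine dip computation. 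You cannot decouple the two one-counter problems because the same spine word must be used for both directions; without either invoking the $2$-VAS theorem or reproving something equivalent to it, your argument does not go through.
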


Our argument goes by a reduction to the reachability problem for
$2$-dimen\-sional vector addition systems,
and uses the following result.
\begin{theorem}[\cite{LS2004}]\label{thm:acc}
    Let $\vec{A}$ be a $2$-VAS and $\Pi\subseteq \vec{A}^*$ be a regular
    language over its actions.
    The relation $\vstep{\Pi}$ is effectively definable in the Presburger arithmetic.
\end{theorem}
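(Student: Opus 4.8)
The plan is to recognize $\vstep{\Pi}$ as (a projection of) the reachability relation of a $2$-dimensional vector addition system with states (VASS), and then to establish semilinearity of that relation by \emph{flattening}. Fix a finite automaton $\mathcal{M}=(Q,\vec{A},\delta,q_0,F)$ with $L(\mathcal{M})=\Pi$ and form the $2$-VASS $\mathcal{V}$ whose control states are $Q$ and which has a transition $q\xrightarrow{\vec a}q'$ for each $(q,\vec a,q')\in\delta$. Then $\vec c\vstep{\Pi}\vec d$ holds iff $\mathcal{V}$ has a run from $(q_0,\vec c)$ to some $(q_f,\vec d)$ with $q_f\in F$ that stays in $\setN^2$ throughout. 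Since Presburger-definable relations over $\setN^2$ are effectively closed under finite union and existential projection, it suffices to prove that, for each pair of control states $(p,q)$, the relation $\{(\vec c,\vec d)\mid (p,\vec c)\xrightarrow{*}(q,\vec d)\}$ is effectively semilinear.

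Second, I would reduce the global relation to the strongly connected case. Decomposing the control graph of $\mathcal{V}$ into strongly connected components yields a DAG, and every run factors through finitely many inter-component edges in topological order. The $(p,q)$-reachability relation is therefore a finite union, over the finitely many sequences of components and connecting edges, of relational compositions of within-component reachability relations with single transition steps. As relational composition over $\setN^2$ is expressible in Presburger arithmetic (a conjunction followed by existential projection of the shared intermediate vector), it is enough to establish semilinearity of the reachability relation within a single strongly connected $2$-VASS.

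Third, I would invoke the structural heart of the argument: every strongly connected $2$-VASS is \emph{flattable}, i.e.\ its reachability relation equals that of a finite union of \emph{flat} sub-VASS, those in which no control state lies on two distinct simple cycles. For a flat VASS the set of runs is covered by finitely many path schemes $\rho_0\theta_1^{*}\rho_1\cdots\theta_k^{*}\rho_k$ with $\rho_i$ simple paths and $\theta_i$ simple cycles. Along such a scheme the final vector is the affine image $\vec d=\vec c+\sum_i n_i\,\mathrm{eff}(\theta_i)+\mathrm{const}$ of the iteration counts $n_i$, and the nonnegativity-along-the-run requirement reduces to finitely many linear inequalities: since each iterated block has a fixed finite profile of within-cycle offsets superimposed on the monotone trend $n_i\,\mathrm{eff}(\theta_i)$, staying in $\setN^2$ is equivalent to nonnegativity at the finitely many junction configurations and at the extremal offset of each block. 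Projecting away the $n_i$ gives a semilinear relation, and taking the finite union over schemes yields the within-component relation.

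Fourth, and this is where dimension two is essential and where I expect the real difficulty, I would prove flattability itself by a geometric case analysis on the trajectory of the two counters in $\setN^2$. If the component contains a cycle with strictly positive effect in both coordinates, one pumps it to drive both counters arbitrarily high, after which the nonnegativity constraint is inactive and reachability is governed by the plainly semilinear group relation in $\setZ^2$, the bounded approach and exit phases being absorbed into fixed prefixes and suffixes. Otherwise a Farkas-type argument places the effects of all cycles in a closed half-plane, which forces one counter to be essentially monotone along every run; bounding the number of direction reversals then lets one rearrange any long run into a flat run with the same endpoints. The main obstacle is precisely this straightening step --- bounding the number of alternations between cycle regimes that can occur before a planar run becomes reorderable into a flat path scheme --- whereas the surrounding steps are routine manipulations of the closure properties of semilinear sets.
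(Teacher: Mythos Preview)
The paper does not prove this theorem at all: it is stated with a citation to~\cite{LS2004} and used as a black box in the proof of \cref{thm:thin}. So there is no proof in the paper to compare your proposal against.

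That said, your sketch is broadly faithful to the argument in the cited reference. The reduction to a $2$-VASS via a product with a finite automaton recognizing $\Pi$, followed by an SCC decomposition and a reduction to flat path schemes, is exactly the architecture of that paper, whose main technical contribution is precisely that $2$-VASS are flatable. Your fourth step identifies the genuine difficulty correctly, but the case analysis you outline is too coarse: the actual argument in~\cite{LS2004} does not split on the existence of a cycle with strictly positive effect in both coordinates versus a half-plane containment of all cycle effects. Rather, it proceeds by analyzing the possible signs of cycle effects more finely and uses an iterated decomposition into linear path schemes, with a delicate bound on how many pumpable cycles need to be interleaved. In particular, the ``Farkas-type'' half-plane dichotomy you invoke does not by itself yield a uniform bound on alternations; one needs an inductive argument on the structure of the SCC. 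Your proposal is a reasonable high-level plan, but the flattening step as you describe it would not close without substantially more work, and you correctly flag this as the main obstacle.
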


Let us call a GVAS $G=(V,A,R)$ \emph{simple} if for every production rule $X\pstep\alpha$,
either $X$ is not derivable from $\alpha$, or $\alpha\in AVA$.
Clearly, every simple GVAS is thin.
Conversely, every thin GVAS can be transformed into an equivalent simple GVAS
by replacing production rules in $V\x A^*V A^*$
by finitely many new rules in $V\x AVA$.
See \cref{lem:thin-simple} in \cref{appendix:sec:thin} for details.
Consequently, it suffices to show the claim of \cref{thm:thin} for simple GVAS
only.
\newcommand{\nonredset}[1]{\Pi_{#1}}
\newcommand{\redset}[1]{\Gamma_{#1}}

  We show by induction on $|V|$ that $\vstep{S}$ is effectively
  definable in Presburger arithmetic for every simple thin GVAS
  $G=(V,A,R)$, and for every nonterminal $S\in V$. 
  Naturally, if $|V|$ is empty the proof is immediate.
  Assume the induction is proved for a number $h\in\setN$,
  and let us 
  consider a simple thin GVAS $G=(V,A,R)$ with $|V|= h+1$,
  and a nonterminal $S\in V$.

\medskip

Notice that $\vec{A}\eqdef\{-1,0,1\}^2$ is a vector addition system.
We consider the finite, directed graph with set of nodes $V$
that contains an $(a,-b)$-labeled edge from $X$ to $Y$ for every production rule
$X\pstep aYb$ in $R$.
To each nonterminal $X\in V$, we associate the regular language
$\nonredset{X}$ of words recognized by this finite graph starting from $S$ and reaching $X$.
By \cref{thm:acc}, $\vstep{\nonredset{X}}$, the regular restriction of the reachability
set of $\vec{A}$, is effectively definable in Presburger arithmetic.

\medskip

As a next ingredient, let $\redset{X}$ be the finite set of words $\alpha\in
(V\cup A)^*$ such that $X\pstep \alpha$ is a production rule and $X$ is
not derivable from $\alpha$.
We observe that $\lang[^G]{\alpha}$ is equal to the language %
of $\alpha$ in the simple grammar $G'$, obtained from $G$ by removing the nonterminal $X$
and all production rules where $X$ occurs.
By induction, and since $\vstep{a}$ are trivially Presburger-definable for
terminals $a\in A$, we deduce that $\vstep{\alpha}$ is effectively
Presburger-definable as a composition of Presburger relations.
Because $\redset{X}$ is finite, we deduce that
$\vstep{\redset{X}}\ =\bigcup_{\alpha\in\redset{X}}\vstep{\alpha}$,
is definable in the Presburger arithmetic as a finite
disjunction of Presburger relations.

\medskip

This following \cref{lem:thininduction} concludes \cref{thm:thin}.
\begin{restatable}{lemma}{thininduction}\label{lem:thininduction}
  For for all $c,d\in\N$,
  $c\vstep{S}d$ if, and only if, the following relation holds:
\begin{equation}
  \label{eq:}
  \phi_S(c,d)\eqdef
  \bigvee_{X\in V}\exists c',d'\in\setN\quad
  (c,d)\vstep{\nonredset{X}} (c',d')
  \land
  c' \vstep{\redset{X}}d'
\end{equation}
\end{restatable}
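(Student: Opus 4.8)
The plan is to prove the equivalence by matching derivations $S \gstep{*} z$ with $z \in A^*$ against the two ingredients on the right-hand side, namely the graph-path language $\nonredset{X}$ and the terminating-rule relation $\redset{X}$. First I would establish a structural decomposition of such derivations that is forced by simplicity. Given any parse tree for $S \gstep{*} z$, follow its \emph{spine}: start at the root, labeled $S$, and keep descending to the unique nonterminal child as long as the production applied is one from which the current nonterminal is derivable. Simplicity guarantees that every such production lies in $AVA$, hence has the shape $X_i \pstep a_i X_{i+1} b_i$ and leaves exactly one nonterminal behind, so the spine is well defined and the sentential form stays of the shape $w_i X_i w_i'$ with $w_i, w_i' \in A^*$. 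As the tree is finite, the descent halts at some node labeled $X$ whose production $X \pstep \alpha$ satisfies $\alpha \in \redset{X}$ (possibly already at the root, giving an empty spine with $X = S$). Unwinding gives $S \gstep{*} w X w' \gstep{} w \alpha w' \gstep{*} z$ with $w = a_0 \cdots a_{m-1}$, $w' = b_{m-1} \cdots b_0$, and $z = w z_\alpha w'$ for some $z_\alpha \in \lang{\alpha}$.

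Next I would connect this decomposition to the two-dimensional VAS $\vec{A} = \{-1,0,1\}^2$ underlying the graph. The spine $S = X_0, \ldots, X_m = X$ is exactly a graph path whose label word $\pi = (a_0,-b_0) \cdots (a_{m-1},-b_{m-1})$ lies in $\nonredset{X}$, and the key idea is that the first coordinate of $\vec{A}$ replays $w$ forward while the second replays $w'$ backward. Concretely, splitting a run $c \vstep{w} c' \vstep{z_\alpha} d' \vstep{w'} d$ defines $c'$ and $d'$; running $\pi$ from $(c,d)$ then takes the first coordinate through the intermediate values of $c \vstep{w} c'$ and the second coordinate through the intermediate values of $d' \vstep{w'} d$ listed in reverse order. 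Since a VAS step relation keeps all configurations in $\N^2$ and nonnegativity of a finite list of values does not depend on the order of listing, this yields the equivalence: $(c,d) \vstep{\nonredset{X}} (c',d')$ holds if and only if both $c \vstep{w} c'$ and $d' \vstep{w'} d$ are valid.

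With this correspondence the two directions close quickly. For the ``only if'' direction, a valid run $c \vstep{z} d$ produces, via the decomposition, a label word $\pi \in \nonredset{X}$ witnessing $(c,d) \vstep{\nonredset{X}} (c',d')$ and a word $z_\alpha \in \lang{\alpha}$ witnessing $c' \vstep{\alpha} d'$, hence $c' \vstep{\redset{X}} d'$, which is precisely one disjunct of $\phi_S(c,d)$. For the ``if'' direction, a satisfying $X, c', d'$ supplies a graph path from $S$ to $X$, i.e.\ genuine productions $X_i \pstep a_i X_{i+1} b_i$ giving $S \gstep{*} w X w'$, together with some $\alpha \in \redset{X}$ and $z_\alpha \in \lang{\alpha}$; reassembling $S \gstep{*} w X w' \gstep{} w \alpha w' \gstep{*} w z_\alpha w' = z$ and concatenating the three valid subruns recovers $c \vstep{z} d$, so $c \vstep{S} d$. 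I expect the main obstacle to be the careful bookkeeping in the coordinate correspondence above, especially checking that reversing the suffix run preserves nonnegativity, and that the boundary cases are handled cleanly: the empty spine $m = 0$, and productions $X \pstep a Y b$ that appear both as a graph edge and, when $X$ is not derivable from $a Y b$, as a member of $\redset{X}$; the latter overlap is harmless since both readings correspond to legitimate derivation steps.
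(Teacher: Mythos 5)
Your proposal is correct and follows essentially the same route as the paper: decompose the derivation at the first application of a rule $X \pstep \alpha$ with $\alpha \in \Gamma_X$ (your parse-tree spine is the same decomposition, forced by simplicity), and encode the prefix $a_1\cdots a_k$ and the reversed suffix $b_k\cdots b_1$ as the two coordinates of a run of the $2$-VAS along a word of $\Pi_X$. The only difference is presentational; you make explicit the order-reversal argument for the second coordinate, which the paper leaves implicit.
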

\begin{proof}
  Assume that $c\vstep{S}d$.
It means that there exists $w\in \lang{S}$ such that $c\vstep{w}d$.
Since $w\in A^*$, we deduce that a sequence of
derivation steps from $S$ that produces $w$ must necessarily derive at some
point a nonterminal symbol $X$ with a production rule $X\pstep \alpha$ such that
$\alpha\in A^*$, and in particular $\alpha\in \redset{X}$.
By considering the first time a derivation step $X\gstep{\alpha}$ with $\alpha\in \redset{X}$
occurs, we deduce a sequence $X_0,\ldots,X_k$ of nonterminal symbols with $X_0=S$,
a sequence $r_1,\ldots,r_k$ of production rules $r_j\in R$
of the form $X_{j-1}\pstep a_jX_j b_j$ with $a_j,b_j\in A$,
a production rule $r_{k+1}\in R$ of the form $X_k\pstep \alpha$ where $\alpha\in
\redset{X_k}$, and a word $w'\in \lang{\alpha}$ such that $w=a_1\ldots
a_k w' b_k\ldots b_1$. Since $c\vstep{w}d$, it follows that there exist
$c',d'\in\setN$ such that $c\vstep{a_1\ldots
  a_k}c'\vstep{w'}d'\vstep{b_k\ldots b_1}d$. Thus
$(c,d)\vstep{\pi}(c',d')$ with
$\pi\eqdef(a_1,-b_1)\ldots(a_k,-b_k)$. It follows that $\phi_S(c,d)$
holds. Conversely, if $\phi_S(c,d)$ holds, by reversing the previous
proof steps, if follows that $c\vstep{S}d$. A detailed proof is given
in \cref{appendix:sec:thin}.
\qed
\end{proof}

\section{Computation of Summaries for Bounded Ratios}
\label{sec:ratios}
In this section,
we show that the summary function $\summary{X}$ is effectively computable
when the ratio $\ratio{X}$ is finite.
In addition, the question whether $\ratio{X}$ is finite
is shown to be decidable.
These results are ultimately obtained by reduction to the thin GVAS case.
We first consider nonterminals with finite displacements.

The next lemma follows from the observation that
if the maximal displacement of a nonterminal is finite,
then it can already be achieved by a short word.

\begin{restatable}{lemma}{existselementaryparsetree}
  \label{lem:asymptotic-summary-finite-displ}
\label{lem:empty}
\label{lem:elementary-parsetree}
  Let $S \in V$ be a nonterminal with $\deplacement{S}<+\infty$.
  Then it holds that $\summary{S}(n) = n + \deplacement{S}$ for every
  $n \in \setNbar$ such that
  $n \geq \degree^{\card{V}}$.
\end{restatable}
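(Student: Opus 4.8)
The plan is to prove the two inequalities $\summary{S}(n) \leq n + \displ{S}$ and $\summary{S}(n) \geq n + \displ{S}$ separately. The upper bound holds unconditionally and is immediate: if $c \vstep{S} d$ with $c \leq n$, then $d = c + \sum z$ for some $z \in \lang{S}$, so $d - c = \sum z \leq \displ{S}$ and hence $d \leq c + \displ{S} \leq n + \displ{S}$. Taking the supremum over all such $c$ and $d$ gives $\summary{S}(n) \leq n + \displ{S}$ for every $n \in \setNbar$ (the case $n = +\infty$ reading $+\infty = +\infty$, since $\displ{S}$ is finite). All the work is therefore in the lower bound.

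For the lower bound I would exhibit, for each finite $n \geq \degree^{\card{V}}$, a single word $z \in \lang{S}$ with $\sum z = \displ{S}$ that is firable from $n$, i.e.\ $n \vstep{z} n + \displ{S}$; this witnesses $\summary{S}(n) \geq n + \displ{S}$. Such a $z$ with $\sum z = \displ{S}$ exists because $\{\sum z \mid z \in \lang{S}\}$ is a subset of $\setZ$ bounded above (finiteness of $\displ{S}$) and hence attains its supremum. Since the actions lie in $\{-1,0,1\}$, a word of length $\ell$ is firable from any counter value $m \geq \ell$ (the counter never drops below $m - \ell \geq 0$). So it suffices to find a displacement-optimal $z$ with $\len{z} \leq \degree^{\card{V}}$, which is exactly the informal statement that the maximal displacement is achieved by a short word.

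To bound the length I would pass to complete parse trees: among all complete parse trees with root labelled $S$ whose yield $z$ satisfies $\sum z = \displ{S}$, pick one, $T$, of minimal size. The key claim is that no nonterminal repeats along any root-to-leaf path of $T$. Suppose a nonterminal $X$ labels two nodes $s \pprefix t$ on a common path, giving a pump $X \gstep{*} \mu X \nu$; replacing the subtree at $s$ by the subtree at $t$ yields a strictly smaller complete tree with root $S$ and yield $z'$, where $\sum z' = \displ{S} - p$ and $p$ is the combined sum of the (terminal) yields of $\mu$ and $\nu$. Maximality of $\displ{S}$ over $\lang{S}$ forces $\sum z' \leq \displ{S}$, hence $p \geq 0$. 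Conversely, \cref{lem:summary-composition-and-propagation} gives $\displ{X} \geq \displ{\mu X \nu} = \displ{\mu} + \displ{X} + \displ{\nu}$, so $\displ{\mu} + \displ{\nu} \leq 0$ (each displacement here is finite, being $> -\infty$ by productivity and dominated by $\displ{S} < +\infty$); since the removed yields lie in $\lang{\mu}$ and $\lang{\nu}$, their sums are at most $\displ{\mu}$ and $\displ{\nu}$, whence $p \leq 0$. Thus $p = 0$, so $z'$ still attains $\displ{S}$ with a strictly smaller tree, contradicting minimality.

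With the no-repetition property, every root-to-leaf path carries at most $\card{V}$ internal nodes (which are labelled by distinct nonterminals), so $T$ has height at most $\card{V}$; as its branching is bounded by $\degree$, it has at most $\degree^{\card{V}}$ leaves, and therefore $\len{z} \leq \degree^{\card{V}}$. This $z$ is firable from every $n \geq \degree^{\card{V}}$ and lands at $n + \displ{S}$, giving the lower bound and, together with the upper bound, the equality $\summary{S}(n) = n + \displ{S}$ for all $n \geq \degree^{\card{V}}$ (the case $n = +\infty$ being clear). I expect the pumping step to be the main obstacle, specifically the two-sided estimate pinning $p = 0$: this is where the finiteness of the displacement is used essentially, as a positive pump value would make $\displ{X}$ infinite while a negative one would contradict optimality of $\sum z$. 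Everything else is routine counting.
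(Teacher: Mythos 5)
Your proof is correct and takes essentially the same route as the paper: the upper bound is immediate from the definition of the displacement, and the lower bound is obtained by taking a size-minimal complete parse tree whose yield attains $\displ{S}$, showing via a collapse/pumping argument that it is elementary (this is exactly \cref{lem:iterationsigma}), and firing its yield of length at most $\degree^{\card{V}}$ from any $n \geq \degree^{\card{V}}$ (\cref{cor:existence-of-elementary-complete-flow-trees}). The only cosmetic difference is that you pin the pumped contribution $p$ to exactly $0$ by a two-sided estimate, whereas the paper derives $p \leq 0$ from finiteness of the displacement and then sandwiches the collapsed yield's sum between $\displ{S}$ and $\displ{S}$ --- logically the same argument.
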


\begin{restatable}{proposition}{computablesummaryfinitedispl}
  \label{prop:computable-summary-finite-displ}
  For every nonterminal $S\in V$ with $\displ{S}<+\infty$,
  the function $\summary{S}$ is effectively computable.
\end{restatable}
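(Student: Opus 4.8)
The plan is to determine $\summary{S}$ on all of $\setNbar$ from finitely many effectively computable pieces. First I would compute the displacement $\displ{S}$, which is effectively computable from $G$ and $S$ by Parikh's theorem and is finite by hypothesis. Setting $N \eqdef \degree^{\card{V}}$, \cref{lem:asymptotic-summary-finite-displ} then gives, for free, the value $\summary{S}(n) = n + \displ{S}$ for all $n \in \setNbar$ with $n \ge N$; in particular $\summary{S}(+\infty) = +\infty$. At the other end, $\summary{S}(-\infty) = -\infty$ directly from the definition, since no $c \in \setN$ satisfies $c \le -\infty$. Hence the only undetermined part of $\summary{S}$ is its restriction to the finite set $\{0, 1, \ldots, N-1\}$, and it remains to compute these finitely many values.

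Each of these values lies in a computable finite range. Since $\summary{S}$ is non-decreasing directly from its definition (compare with \cref{rem:summary-monotonicity}), we have $\summary{S}(n) \le \summary{S}(N) = N + \displ{S}$ for every $n < N$. Write $B \eqdef \summary{S}(N)$; if $B = -\infty$ then $\summary{S}(n) = -\infty$ for all $n \le N$ and we are done, so assume $B \in \setN$. Then for each $n < N$ we have $\summary{S}(n) \in \{-\infty\} \cup \{0, \ldots, B\}$, and computing it amounts to deciding, for each threshold $d \le B$, the bounded query $\summary{S}(n) \ge d$ and returning the largest $d$ for which it holds.

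The heart of the argument is therefore to decide $\summary{S}(n) \ge d$ for fixed $n, d \le B$, without appealing to the later results of this section. The idea is that, because $\displ{S}$ is finite, these bounded queries are always witnessed by runs whose intermediate counter values stay below a computable bound $B'$, which makes the problem finite-state. Concretely, I would show that any derivation $S \gstep{*} z$ with a valid run $c \vstep{z} d'$ can be transformed, without decreasing the reached value $d'$ and keeping the run in $\setN$, into one whose counter values are all at most $B'$: whenever a nonterminal repeats along a root-to-node path of the parse tree it exposes a pumpable context $X \gstep{*} u X v$, and finiteness of $\displ{S}$ forces $\sum u + \sum v \le 0$ for every such reachable context (otherwise iterating it would make $\displ{S}$ infinite). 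Unpumping such a context deletes the flanking factors $u$ and $v$, removing a ``hump'' of the counter trajectory while not lowering the final value; iterating this peak-removal bounds all intermediate values by a $B'$ computable from $\card{V}$, $\degree$ and $B$.

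Once intermediate values are bounded by $B'$, the restriction of each step relation $\vstep{X}$ to the finite interval $\{0, \ldots, B'\}$ can be computed as a least fixpoint: starting from the terminal actions of $A = \{-1,0,1\}$ acting on $\{0,\ldots,B'\}$, one closes under composition along the production rules $X \pstep \alpha$, reading each sentential form left to right through the bounded counter. This saturation terminates, as there are finitely many relations over a finite set, and yields the finite relations recording which pairs $(a,b)$ with $a,b \le B'$ admit a run $a \vstep{X} b$ bounded by $B'$. Reading off $\summary{S}(n) = \max\{b : \exists a \le n,\ a \vstep{S} b \text{ through such a bounded run}\}$ (with $\max \emptyset = -\infty$) then recovers $\summary{S}(0), \ldots, \summary{S}(N-1)$ and completes the computation of $\summary{S}$. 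I expect the peak-removal lemma of the previous paragraph to be the main obstacle: one must carry out the unpumping at the level of parse trees and verify carefully that deleting the factors $u$ and $v$ preserves both non-negativity of the run and the final counter value, which is exactly where finiteness of $\displ{S}$ is used.
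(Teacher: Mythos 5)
Your overall architecture --- the closed form $\summary{S}(n)=n+\displ{S}$ for $n\ge\degree^{\card{V}}$ via \cref{lem:asymptotic-summary-finite-displ}, the reduction to finitely many threshold queries $\summary{S}(n)\ge d$, and the fixpoint saturation over relations on a bounded counter once intermediate values are bounded --- is sound and close in spirit to the paper, which likewise reduces to threshold queries and then enumerates complete flow trees of height at most $\card{V}\cdot(\degree^{\card{V}}+1)$. The gap is in your peak-removal lemma, and it is not the routine verification you suggest: the unpumping step is unsound as stated. If a repeated nonterminal exposes $X\gstep{*}uXv$ with derived factors $z_u\in\lang{u}$ and $z_v\in\lang{v}$, finiteness of $\displ{S}$ does give $\sum z_u+\sum z_v\le 0$, but it does not prevent $\sum z_u>0$; deleting $z_u$ and $z_v$ shifts every counter value inside the inner $X$-segment down by $\sum z_u$ and can push the run out of $\setN$. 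Concretely, for $X\pstep 1\,X\,(-1)\mid(-1)\,1$ one has $\displ{X}=0$ and $0\vstep{1\,(-1)\,1\,(-1)}0$, yet unpumping the outer context yields the word $(-1)\,1$, which admits no run from $0$. So ``the same nonterminal repeats'' is not a safe trigger for collapsing, and finiteness of the displacement alone does not rescue it.

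The paper's proof shows how to repair exactly this point, and it needs two mechanisms absent from your sketch. First, it only collapses a repeated pair $r\pprefix s$ with $\lsym{r}=\lsym{s}$ when the \emph{input values} agree and the descendant's output is not smaller, replacing the subtree at $r$ by the subtree at $s$ (always safe, since flow trees tolerate slack); the remaining case, where the ancestor's output strictly exceeds the descendant's at equal input, is shown to force $\displ{\lsym{r}}=+\infty$, a contradiction. Second --- and this is what breaks the circularity, since the pigeonhole over pairs (nonterminal, input value) presupposes that input values are already bounded --- whenever a node $s$ has $\lin{s}\ge\degree^{\card{V}}$, the entire subtree rooted at $s$ is replaced by a complete \emph{elementary} flow tree of height at most $\card{V}$ realizing the optimal output $\lin{s}+\displ{\lsym{s}}$ (\cref{cor:existence-of-elementary-complete-flow-trees}); large counter values are not unpumped away but short-circuited. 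Without an analogue of this second step your bound $B'$ does not close. I would either import this case analysis wholesale or recast your argument in terms of the paper's flow trees, whose inequality-based conditions are what make these subtree replacements legal.
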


The following lemma will be useful in our reduction below.
\begin{restatable}{lemma}{ratioinfinite}
  \label{lem:ratioinfinite}
  Let $X \in V$ be a nonterminal.
  If there is a derivation $X \gstep{*} u X v$ such that
  $\displ{uv}=+\infty$
  then
  it holds that $\ratio{X} = +\infty$.
\end{restatable}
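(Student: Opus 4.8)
The plan is to prove the equivalent statement that $\ratio{X} \ge M$ for every fixed integer $M \ge 1$; since $\ratio{X} = \liminf_{n} \summary{X}(n)/n$, establishing this for all $M$ yields $\ratio{X} = +\infty$. The whole argument rests on one pumping construction. Iterating the self-derivation gives $X \gstep{*} u^k X v^k$ for every $k$; replacing the inner $X$ by a fixed terminal word $\hat z \in \lang{X}$ (which exists, as $X$ is productive) and each context copy by fixed terminal words $\hat u \in \lang{u}$, $\hat v \in \lang{v}$ produces, for every $k$, a terminal word $W_k \eqdef \hat u^k \hat z \hat v^k \in \lang{X}$. The leverage provided by the hypothesis is that, writing $\displ{uv}=\displ{u}+\displ{v}$ via \cref{lem:summary-composition-and-propagation}, the quantity $\sum\hat u + \sum\hat v$ can be made as large as we like.

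I would first record two elementary tools. A VAS run of $W_k$ from a starting value $c$ is feasible exactly when $c$ is at least the least non-negative start $m(W_k)$ keeping all prefix levels non-negative, and it then reaches $c + \sum W_k$; hence $\summary{X}(n) \ge c + \sum W_k$ whenever $m(W_k) \le c \le n$. The second tool is the monotonicity of \cref{rem:summary-monotonicity}, which lets a lower bound obtained along a subsequence $n \to +\infty$ be interpolated to all large $n$; this is precisely what will upgrade a bound on the reachable values to the $\liminf$ defining $\ratio{X}$.

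The core is a case split on the sign of $\displ{u}$ (both $\displ{u},\displ{v}$ exceed $-\infty$ by productivity, so their sum being $+\infty$ forces at least one to be $+\infty$). If $\displ{u} \ge 0$, one can pick $\hat u$ with $\sum\hat u \ge 0$ together with a strictly positive net gain $s \eqdef \sum\hat u + \sum\hat v$ (using $\displ{v}=+\infty$ when $\displ{u}$ is finite, and a large $\hat u$ dominating a fixed $\hat v$ when $\displ{u}=+\infty$): then the $u$-block only climbs, so $m(W_k)$ is bounded by a constant $c_0$ independent of $k$ while $\sum W_k \to +\infty$; thus $\summary{X}(c_0) = +\infty$, and \cref{rem:summary-monotonicity} gives $\ratio{X} = +\infty$ outright. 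If instead $\displ{u}<0$ (hence finite, forcing $\displ{v}=+\infty$), I fix $\hat u$ with $\sum\hat u=\displ{u}$, so the drain rate $p \eqdef -\displ{u}>0$ is constant and $m(W_k) \le pk + C$ for a constant $C$; choosing $\sum\hat v \ge (M+1)p$ makes $s \ge Mp$. Running $W_k$ from $n \eqdef pk + C$ reaches $k(p+s)+O(1)$, so $\summary{X}(n)/n \to (p+s)/p \ge M+1$ along $n = pk + C$, and monotonicity propagates this to $\liminf_{n} \summary{X}(n)/n \ge M+1 > M$.

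I expect the main obstacle to be the bookkeeping for $m(W_k)$: once $\hat u,\hat v,\hat z$ are fixed, one must verify that all internal dips of the blocks contribute only the $k$-independent constant $C$, and in particular that the $\hat v$-block — whose own dip may be large, since a large $\sum\hat v$ forces a long word — does not drain the counter further. This uses exactly that $\sum\hat v \ge 0$ in the relevant case, so that the global minimum of the prefix profile is attained at the end of the $u$-block rather than inside the $v$-blocks. The second delicate point, already flagged, is that the construction bounds $\summary{X}$ only along the arithmetic progression $n = pk + C$, so \cref{rem:summary-monotonicity} is essential to convert this into the $\liminf$ that $\ratio{X}$ denotes.
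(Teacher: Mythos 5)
Your argument is correct and is essentially the paper's own proof: both pump the derivation $X \gstep{*} u^k X v^k$, split on the sign of the effect of the chosen $u$-witness, pick a $v$-witness whose gain over-compensates the $u$-drain by the target factor (your $\sum\hat v \geq (M+1)p$ is exactly the paper's choice of $\mu,\nu$ with $\lambda\mu+\nu\geq 0$ and $\mu+\nu\geq 1$), and invoke \cref{rem:summary-monotonicity} to pass from the arithmetic progression $n=pk+C$ to the $\liminf$. The only difference is presentational: the paper chains summary functions via \cref{lem:summary-composition-and-propagation}, whereas you track prefix minima of concrete witness words $\hat u^k\hat z\hat v^k$ --- the same bookkeeping, which also absorbs your worry about dips inside the blocks.
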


We will now show that summaries are computable
for nonterminals with finite ratio. %
The main idea is to transform the given GVAS into an equivalent \emph{thin}
GVAS, by hard-coding the effect of nonterminals with finite displacement.
This is effective due to \cref{prop:computable-summary-finite-displ}.
Computability of $\ratio{X}$ and $\summary{X}$ then follows from
\cref{thm:thin}.
The following ad-hoc notion of equivalence is sufficient for this purpose.
Crucially,
it has no requirement for nonterminals with infinite ratio.

\smallskip

Two GVAS $G = (V, A, R)$ and $G' = (V', A', R')$ are called
\emph{equivalent} if
firstly $V = V'$,
secondly $\ratio[^{G}]{X}=\ratio[^{G'}]{X}$ for every nonterminal $X$, and
thirdly $\summary[^{G}]{X}=\summary[^{G'}]{X}$ for every nonterminal $X$ \emph{with finite ratio}.
\paragraph{Unfoldings.}
For our first transformation,
assume a nonterminal $X \in V$ with $\displ[^G]{X}<+\infty$.
The \emph{unfolding of $X$} is the GVAS
$H = (V, A, R')$ where
$R'$ is obtained from $R$ by
removing all production rules $X \pstep \alpha$ and
instead adding, for every $0\le i\le\degree^{\card{V}}$
with $j=\summary[^G]{X}(i) > -\infty$,
a rule $X \pstep (-1)^i (1)^j$.

Observe that the language $\lang[^{H}]{X}$ is finite,
and that $H$ can be computed from $G$ and $X$ because
$\summary[^G]{X}$ is computable by
\cref{prop:computable-summary-finite-displ}.

\begin{restatable}{fact}{factsummarization}
  \label{fact:transformation:summarization}
  The unfolding of $X$ is equivalent to $G$.
\end{restatable}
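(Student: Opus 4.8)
The plan is to prove the two nontrivial equivalence requirements — equality of ratios and equality of summaries on finite-ratio nonterminals — by establishing the stronger statement that $\summary[^{H}]{Y} = \summary[^{G}]{Y}$ holds for \emph{every} nonterminal $Y \in V$. Since $V = V'$ by construction and ratios are determined by summaries, both requirements follow at once.

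First I would pin down $\summary[^{H}]{X}$. As $\lang[^{H}]{X}$ is exactly the finite set of terminal words $(-1)^i (1)^j$ attached by the construction, firing such a word from a counter value $c$ is enabled iff $c \ge i$ and yields $c - i + j$; maximising over $c \le n$ gives $\summary[^{H}]{X}(n) = \sup\{\, n - i + \summary[^{G}]{X}(i) : 0 \le i \le \min(n, \degree^{\card{V}}),\ \summary[^{G}]{X}(i) > -\infty \,\}$. The monotonicity of \cref{rem:summary-monotonicity} bounds each term by $\summary[^{G}]{X}(n)$, giving $\summary[^{H}]{X} \le \summary[^{G}]{X}$. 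For the reverse inequality I would split on $n$: for $n \ge \degree^{\card{V}}$ the choice $i = \degree^{\card{V}}$, together with $\summary[^{G}]{X}(i) = i + \displ[^G]{X}$ supplied by \cref{lem:elementary-parsetree} (note $\displ[^G]{X} > -\infty$ by productivity), reproduces $n + \displ[^G]{X} = \summary[^{G}]{X}(n)$ exactly; for $n < \degree^{\card{V}}$ the choice $i = n$ already realises $\summary[^{G}]{X}(n)$. The $-\infty$ cases and the endpoints of $\setNbar$ are routine, so $\summary[^{H}]{X} = \summary[^{G}]{X}$.

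The core step is to lift this single-nonterminal equality to all of $V$, which I would carry out by surgery on complete flow trees, using \cref{lem:correctness-of-flow-trees,lem:existence-of-complete-flow-trees} to translate freely between $\summary{\cdot}(c) \ge d$ and the existence of a complete flow tree rooted at $\lnode{\varepsilon}{c}{Y}{d}$. For $\summary[^{H}]{Y} \le \summary[^{G}]{Y}$: in a complete $H$-flow-tree the $X$-nodes are never nested (their children are terminals), and each such node $t$ satisfies $\lout{t} \le \summary[^{H}]{X}(\lin{t}) = \summary[^{G}]{X}(\lin{t})$; I would replace each subtree at $t$ by a complete $G$-flow-tree for $X$ whose root input is at most $\lin{t}$ and root output at least $\lout{t}$, which exists by \cref{lem:existence-of-complete-flow-trees}. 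Since the non-$X$ rules are shared, the result is a complete $G$-flow-tree with unchanged root, witnessing $\summary[^{G}]{Y}(n) \ge \summary[^{H}]{Y}(n)$. The reverse direction is symmetric, replacing the \emph{outermost} $X$-subtrees of a complete $G$-flow-tree by complete $H$-flow-trees for $X$ (which introduce no further nonterminals), yielding $\summary[^{H}]{Y}(n) \ge \summary[^{G}]{Y}(n)$.

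The main obstacle I anticipate is verifying that these replacements preserve the flow conditions globally. The key observation is that each replacement can only lower the input and raise the output of the replaced subtree, while every flow inequality places inputs on the small side and outputs on the large side: at the parent of a replaced node $u = sj$, the conditions $\lin{sj} \le \lin{s}$ or $\lin{sj} \le \lout{s(j-1)}$ have their left-hand side decreased, and $\lin{s(j+1)} \le \lout{sj}$ and $\lout{s} \le \lout{sk}$ have an increased right-hand side, so all survive; as the outermost $X$-subtrees are pairwise disjoint, the simultaneous replacements do not interfere. Once this monotone bookkeeping is discharged, $\summary[^{H}]{Y} = \summary[^{G}]{Y}$ holds on $\setN$, extends to $\setNbar$ by taking suprema, and hence $\ratio[^{H}]{Y} = \ratio[^{G}]{Y}$, so $H$ is equivalent to $G$.
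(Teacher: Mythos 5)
Your proposal is correct and follows essentially the same two-step strategy as the paper: first establish $\summary[^{H}]{X} = \summary[^{G}]{X}$ by the same computation (including the appeal to \cref{lem:asymptotic-summary-finite-displ} for $n \geq \degree^{\card{V}}$), and then transfer this equality to every nonterminal via flow-tree surgery based on \cref{lem:correctness-of-flow-trees,lem:existence-of-complete-flow-trees}. The only cosmetic difference is in the second step, where the paper truncates each branch of the flow tree at the first occurrence of $X$ (so that $X$-nodes become leaves and the leaf flow condition carries the argument), whereas you substitute complete flow trees of the other grammar at the $X$-nodes and check that the flow conditions survive; both manipulations are sound.
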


\paragraph{Expansions.}
Our second transformation completely inlines a given nonterminal with
finite language.
Given a nonterminal $Y \in V$ with $\lang[^G]{Y}$ finite,
the \emph{expansion of $Y$} is the GVAS $H = (V, A, R')$
where $R'$ is obtained from $R$ by replacing each production rule
$X \pstep \alpha_0 Y \alpha_1 \cdots Y \alpha_k$,
with $Y$ not occurring in $\alpha_0 \cdots \alpha_k$,
by the rules
$X \pstep \alpha_0 z_1 \alpha_1 \cdots z_k \alpha_k$
where $z_1, \ldots, z_k\in\lang[^G]{Y}$.
Note that $H$ can be computed from $G$ and $Y$.
Obviously,
languages are preserved by this transformation,
i.e.,
$\lang[^G]{w} = \lang[^H]{w}$ for every $w$ in $(V \cup A)^*$.
The following fact follows.

\begin{fact}
  \label{fact:transformation:expansion}
  The expansion of $Y$ is equivalent to $G$.
\end{fact}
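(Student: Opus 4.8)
The plan is to reduce all three equivalence conditions to the language-preservation property $\lang[^G]{w} = \lang[^H]{w}$ stated just above the fact. The first condition, $V = V'$, holds trivially, since the expansion of $Y$ modifies only the production rules and leaves the set of nonterminals untouched.

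The decisive observation is that both the summary function and the ratio of a word are determined entirely by its language. Indeed, by definition the step relation induced by $w$ is $\vstep{\lang[^G]{w}}$, so $\summary[^G]{w}(n) = \sup\{d \mid \exists c \leq n : c \vstep{w} d\}$ is a function of $\lang[^G]{w}$ alone. Applying this to an arbitrary nonterminal $X$ and invoking $\lang[^G]{X} = \lang[^H]{X}$ immediately yields $\summary[^G]{X} = \summary[^H]{X}$. This already establishes the third equivalence condition --- in fact for \emph{every} nonterminal, not merely those of finite ratio.

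For the second condition, I recall that $\ratio[^G]{X} = \liminf_{n \to +\infty} \summary[^G]{X}(n)/n$ is defined purely in terms of $\summary[^G]{X}$. Since the summaries agree, so do the ratios: $\ratio[^G]{X} = \ratio[^H]{X}$ for every $X$. Hence all three equivalence conditions hold.

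I expect no genuine obstacle here: the mathematical content of the fact is fully absorbed by the language-preservation property, which was already argued. The only point deserving explicit care is to spell out that the step relation --- and therefore the summary and the ratio --- factors through the language of a word, a fact immediate from the definitions. This is precisely why the equivalence holds in the strong form that summaries and ratios coincide on \emph{all} nonterminals, which is more than the definition of equivalence demands.
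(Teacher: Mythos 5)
Your proof is correct and takes essentially the same route as the paper: the paper derives the fact in one line from the language-preservation property $\lang[^G]{w} = \lang[^H]{w}$, and your write-up merely makes explicit the (correct) reason this suffices, namely that the step relation, hence the summary function and the ratio, factor through the language of a word. Nothing is missing.
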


\paragraph{Abstractions.}
Our last transformation simplifies a given nonterminal with
infinite ratio,
in such a way that its ratio remains infinite.
Given a nonterminal $X \in V$ with $\ratio[^G]{X} = +\infty$,
the \emph{abstraction of $X$} is the GVAS
$H = (V, A \cup \{1\}, R')$ where
$R'$ is obtained from $R$ by removing all production rules $X \pstep \alpha$ and
replacing them by the two rules $X \pstep 1 X \mid \varepsilon$.
Note that $H$ can be computed from $G$ and $X$.

\begin{restatable}{fact}{factabstraction}
  \label{fact:transformation:abstraction}
  The abstraction of $X$ is equivalent to $G$.
\end{restatable}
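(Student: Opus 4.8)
The plan is to verify the three requirements of the equivalence---same nonterminals, identical ratios, and identical summaries on \emph{finite}-ratio nonterminals---by a case analysis on whether a nonterminal $Y$ derives $X$. Write $H=(V,A\cup\{1\},R')$ for the abstraction. Since $A=\{-1,0,1\}$ already contains $1$, the two systems share the same nonterminals and the same alphabet, so the first requirement is immediate. The only rules in which $G$ and $H$ differ are those with left-hand side $X$; call all other rules \emph{common}. A first easy observation is that $\lang[^H]{X}=1^*$, so $\summary[^H]{X}(n)=+\infty$ for every $n\in\setN$ and hence $\ratio[^H]{X}=+\infty=\ratio[^G]{X}$.

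The technical core is a \emph{propagation} statement: in any GVAS $K$ with $\ratio[^K]{X}=+\infty$, every nonterminal $Y$ admitting a derivation $Y\gstep{*}uXv$ satisfies $\ratio[^K]{Y}=+\infty$. To prove it I would invoke \cref{lem:summary-composition-and-propagation}: part~2 gives $\ratio[^K]{Y}\geq\ratio[^K]{uXv}$, while part~1 gives $\summary[^K]{uXv}=\summary[^K]{v}\circ\summary[^K]{X}\circ\summary[^K]{u}$. By \cref{rem:summary-monotonicity} together with productivity, each of $\summary[^K]{u}$ and $\summary[^K]{v}$ admits a lower bound of the form $\ell\mapsto\ell+\mathit{const}$ on a final segment of $\setN$, so both tend to $+\infty$; composing these with $\summary[^K]{X}$, which grows superlinearly since $\ratio[^K]{X}=+\infty$, yields $\summary[^K]{uXv}(n)/n\to+\infty$, i.e. $\ratio[^K]{uXv}=+\infty$. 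This \text{liminf} estimate---routine but requiring some care with the $\setNbar$-arithmetic when intermediate values reach $+\infty$---is the step I expect to be the main obstacle.

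With propagation available, the second ingredient is that derivability of $X$ coincides in $G$ and $H$: truncating any derivation at the first sentential form containing $X$ produces a derivation that uses only common rules, so $X$ is derivable from $Y$ in $G$ if and only if it is derivable from $Y$ in $H$. The case analysis then closes the proof. If $Y$ derives $X$, applying propagation once in $G$ and once in $H$ (using $\ratio[^G]{X}=\ratio[^H]{X}=+\infty$) gives $\ratio[^G]{Y}=\ratio[^H]{Y}=+\infty$, so the ratio requirement holds and the summary requirement is vacuous. If $Y$ does not derive $X$, then no derivation from $Y$ in either system ever introduces $X$, hence none uses an $X$-rule; only common rules occur, so $\lang[^G]{Y}=\lang[^H]{Y}$. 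As summaries and ratios depend only on the language, this yields $\summary[^G]{Y}=\summary[^H]{Y}$ and $\ratio[^G]{Y}=\ratio[^H]{Y}$, settling both remaining requirements. Note that the argument deliberately says nothing about $\summary[^G]{Y}$ versus $\summary[^H]{Y}$ for infinite-ratio $Y$, matching the stipulation that equivalence imposes no constraint there.
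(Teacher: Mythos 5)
Your proposal is correct and follows essentially the same route as the paper: a case split on whether a nonterminal derives $X$ (the paper's set $D_X$, which it likewise observes is the same in $G$ and in the abstraction), propagation of the infinite ratio along $Y \gstep{*} uXv$ via \cref{lem:summary-composition-and-propagation}, and language preservation for the remaining nonterminals. Your only addition is to spell out the $\liminf$ estimate behind the propagation step, which the paper leaves implicit.
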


We now show how to effectively transform a GVAS %
into an equivalent thin GVAS.
As a first step, we hard-code the effect of nonterminals
with finite displacement into the production rules, using
unfoldings and expansions described above.
By \cref{fact:transformation:summarization,fact:transformation:expansion},
this results in an equivalent GVAS.
Moreover, it now holds that every nonterminal $Y$ occurring on the right handside $\alpha$
of some production rule $X \pstep \alpha$ has $\displ{Y} = +\infty$.
Let $(V,A,R)$ be the constructed GVAS and assume that it is not already thin.
This means that there exists a production rule
$X \pstep \alpha$ with $\alpha \not \in A^*VA^*$ such that
$X$ is derivable from $\alpha$.
So $X \gstep{*} u X v$ for some words
$u, v$ in $(V\cup A)^*$ such that $uv$ contains some nonterminal $Y$.
As $Y$ occurs on the right handside of the initial production rule,
it must have an infinite displacement.
From \cref{lem:summary-composition-and-propagation} we thus get that
also $\displ{uv} = +\infty$, and \cref{lem:ratioinfinite}
lets us conclude that $\ratio{X} = +\infty$.
Therefore, by \cref{fact:transformation:abstraction},
we may replace $G$ by the abstraction of $X$.
Observe that this strictly decreases the number of production rules
violating the condition for the system to be thin
and at the same time it preserves the property that $\displ{Y} = +\infty$
for every $Y \in V$ occurring in the right handside a production rule.
By iterating this abstraction process,
we obtain a thin GVAS that is equivalent to the GVAS that we started with.
We have thus shown the following proposition.
Its corollary follows from \cref{thm:thin},
and states the missing ingredients for the proof of the coverability
problem.

\begin{proposition}
  \label{prop:reduction-to-thin}
  For every GVAS $G$,
  there exists an effectively constructable thin GVAS that
  is equivalent to $G$.
\end{proposition}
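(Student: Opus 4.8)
The plan is to massage $G$ into thin form by two successive, equivalence-preserving clean-up phases, relying only on the three transformations (unfolding, expansion, abstraction) established above.

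First I would eliminate every nonterminal of finite displacement from the right-hand sides. As long as some $X \in V$ has $\displ{X} < +\infty$, I would unfold $X$ --- effective by \cref{prop:computable-summary-finite-displ} --- so that $\lang{X}$ becomes finite, and then expand $X$, inlining it into each right-hand side on which it occurs. By \cref{fact:transformation:summarization} and \cref{fact:transformation:expansion} both operations preserve equivalence; moreover expansion is language-preserving and unfolding preserves the displacement of $X$, so no right-hand-side nonterminal can lose its infinite displacement. Once every finite-displacement nonterminal has been processed, the system is still equivalent to $G$ and satisfies the invariant that every nonterminal $Y$ occurring on a right-hand side has $\displ{Y} = +\infty$.

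Second I would remove the remaining obstructions to thinness one at a time. If the current system is not thin, there is a rule $X \pstep \alpha$ with $\alpha \notin A^*VA^*$ from which $X$ is derivable, so $X \gstep{*} uXv$ for words $u,v$ whose concatenation $uv$ contains some nonterminal $Y$. The invariant gives $\displ{Y} = +\infty$, hence $\displ{uv} = +\infty$ by \cref{lem:summary-composition-and-propagation}, and therefore $\ratio{X} = +\infty$ by \cref{lem:ratioinfinite}. Since $X$ has infinite ratio, \cref{fact:transformation:abstraction} lets me replace its rules by the thin pair $X \pstep 1 X \mid \varepsilon$ without changing equivalence. This strictly lowers the number of rules violating thinness --- only $X$'s rules change, and the two new ones are thin --- and it maintains the right-hand-side invariant, since after abstraction $X$ derives $1^n$ for every $n$ and thus $\displ{X} = +\infty$, while all other right-hand sides are untouched and their displacements cannot have dropped. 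Iterating this step terminates because the count of offending rules strictly decreases, leaving an equivalent thin GVAS, which proves the proposition.

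I expect the delicate point to be the interplay between the two phases: the abstraction argument for $X$ goes through only because phase one guarantees that every nonterminal in $uv$ has infinite displacement. The main obstacle is thus to verify that this displacement invariant really survives each abstraction --- in particular that abstracting one infinite-ratio nonterminal cannot reintroduce a finite-displacement symbol onto any right-hand side --- and, throughout, that every transformation respects the ad-hoc equivalence (equal ratios everywhere, equal summaries on finite-ratio nonterminals) rather than mere language equality, for which displacements need not be preserved exactly.
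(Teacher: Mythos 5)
Your proof is correct and follows essentially the same two-phase strategy as the paper: first unfold and expand every finite-displacement nonterminal so that all right-hand-side nonterminals have infinite displacement, then iteratively abstract the infinite-ratio nonterminals witnessing non-thinness, with termination by the decreasing count of offending rules. The invariant-preservation points you flag as delicate are exactly the ones the paper also asserts (and treats no more formally than you do).
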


\vspace{-0.9em} %
\begin{restatable}{corollary}{summarycomputableboundedratio}
  \label{cor:ratio-computable}
  \label{cor:summary-computable-bounded-ratio}
  The question whether $\ratio{X} < +\infty$ holds for a
  given GVAS $G$ and a given nonterminal $X$, is decidable.
  Moreover,
  if $\ratio{X} < +\infty$ then
  the function $\summary{X}$ is effectively computable.
\end{restatable}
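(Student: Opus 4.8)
The plan is to combine the reduction to thin GVAS with the semilinearity of their step relations, and then to read off both the finiteness of the ratio and the summary function directly from a Presburger formula. First I would, by \cref{prop:reduction-to-thin}, compute a thin GVAS $G'$ that is equivalent to $G$. By the definition of equivalence we have $\ratio[^{G'}]{X} = \ratio{X}$ for every nonterminal $X$, and $\summary[^{G'}]{X} = \summary{X}$ whenever the ratio is finite; hence it suffices to answer both questions inside $G'$ and transfer the answers back. By \cref{thm:thin}, I can then compute a Presburger formula $\psi(c,d)$ defining the relation $\vstep{X}$ in $G'$.

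Next I would express the summary function over this relation. Since $\summary{X}(n) = \sup\{d \mid \exists c \leq n : \psi(c,d)\}$, the relation $Q(n,d) \equiv \exists c \leq n : \psi(c,d)$ is again Presburger-definable, and $\summary{X}(n)$ is the supremum of its $n$-th fiber $Q_n = \{d \mid Q(n,d)\}$. The graph of $\summary{X}$ restricted to the inputs where it is finite is then defined by the Presburger formula $Q(n,d) \wedge \forall d' (d' > d \to \neg Q(n,d'))$, and the inputs on which it equals $-\infty$ by $\forall d\, \neg Q(n,d)$.

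To decide finiteness of the ratio, I would test whether some fiber $Q_n$ is infinite, i.e.\ whether the Presburger sentence $\exists n\, \forall m\, \exists d\, (d > m \wedge Q(n,d))$ holds. The key observation, and the step I expect to require the most care, is that for a semilinear relation this single boundedness test already decides whether $\ratio{X} < +\infty$. Indeed, if some fiber is infinite then $\summary{X}(n_0) = +\infty$ for that $n_0$, and \cref{rem:summary-monotonicity} propagates this to all $n \geq n_0$, forcing $\ratio{X} = +\infty$. Conversely, if every fiber is finite then no linear component of the semilinear set $Q$ can have a period vector of the form $(0,q)$ with $q>0$, so every period direction increasing $d$ also increases $n$; this confines $Q$ below a linear envelope $d \leq a n + b$ and makes $n \mapsto \summary{X}(n)$ eventually affine with finite slope, whence $\ratio{X} \leq a < +\infty$. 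The technical content here is precisely justifying that finite fibers of a semilinear set force an eventually affine envelope, reducing the $\liminf$ defining the ratio to a first-order boundedness property.

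Finally, when the test reports a finite ratio, the Presburger definition of the graph of $\summary{X}$ yields, by quantifier elimination, the finitely many exceptional values together with the eventual affine form $\summary{X}(n) = a n + b$, which is the required effective description of $\summary{X}$. By the equivalence of $G'$ and $G$, this computation also yields $\summary[^{G}]{X}$ and settles the finiteness question for the original system $G$, completing the proof.
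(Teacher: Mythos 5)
Your proposal is correct and follows essentially the same route as the paper: reduce to a thin GVAS via \cref{prop:reduction-to-thin}, use \cref{thm:thin} to get a Presburger definition of $\vstep{X}$ (hence computability of $\summary{X}$ pointwise), and decide finiteness of the ratio by the dichotomy between an infinite fiber (equivalently, a period vector $(0,q)$ with $q>0$ in the semilinear decomposition, forcing $\ratio{X}=+\infty$ by monotonicity) and a linear envelope $d\leq\lambda c+b$ bounding the ratio. The only cosmetic difference is that you phrase the test as a first-order boundedness sentence rather than inspecting the semilinear representation directly, and your closing claim of an exactly affine eventual form is slightly stronger than needed (the graph is only guaranteed to be eventually periodic-affine), but neither affects correctness.
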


\begin{proof}[of \cref{thm:main}]
    Thanks to \cref{prop:certificate-bounds}, it suffices to check
    finitely many candidate certificates, each consisting of
    a parse tree
    $(T,\lsymoperator)$
    of bounded height and labeling functions
    $\linoperator, \loutoperator:T\to\N$ with bounded values.
    It remains to show that it is possible to verify that
    a given candidate is in fact a certificate.
    For this, it needs to satisfy the two flow conditions
    from page~\pageref{flow-conditions}
    and moreover, every leaf $t$ with
    $\ratio{\lsym{t}}=+\infty$ must have
    some ancestor $s\pprefix t$ with
    $\lsym{s}=\lsym{t}$ and $\lin{s}<\lin{t}$.

    The first flow condition can easily be verified locally.
    By \cref{cor:ratio-computable}, it is possible to check
    if $\ratio{\lsym{t}}<+\infty$
    for every leaf $t$ and therefore verify the third condition.
    In order to verify the second flow condition,
    it suffices to check that $\summary{\lsym{t}}(\lin{t})\ge \lout{t}$
    holds for all leaves with finite ratio $\ratio{\lsym{t}}<+\infty$.
    This is effective due to \cref{cor:summary-computable-bounded-ratio}.
    Indeed, if none of the above checks fail
    then it follows from \cref{lem:summary-for-infinite-ratio}
    that $\summary{\lsym{t}}(\lin{t})\ge \lout{t}$ necessarily holds
    also for the remaining leaves $t$ with $\ratio{\lsym{t}}=+\infty$
    (see \cref{fact:certificate-decidable} in \cref{appendix:sec:ratios} for details).
    This means that the candidate satisfies the second flow condition
    and therefore all requirements for a certificate.
    \qed
\end{proof}

\section{Conclusion}
The decidability of the coverability problem for pushdown VAS
is a long-standing open question with applications
for program verification.
In this paper,
we proved that coverability is decidable for $1$-dimensional pushdown VAS.
We reformulated the problem to the equivalent coverability problem
for $1$-dimensional grammar-controlled vector addition systems,
and analyzed their behavior in terms of structural properties of derivation trees.

An \NP\ lower complexity bound can be shown by reduction from the
\textsc{Subset Sum} problem.
A closer inspection of our approach allows to derive an \EXPSPACE\ upper bound,
using recent results by Blondin et al.~\cite{BFGHM14}
on $2$-dimensional VAS reachability.
The exact complexity is open,
and so is the decidability of the problem
for larger dimensions.

\bibliographystyle{splncs03}
\bibliography{references}

\clearpage
\appendix
\section{Elementary Parse Trees}
\label{appendix:intro}
Let $G=(V,A,R)$ be a context-free grammar.
A parse tree $(T, \lsymoperator)$ for $G$ is called \emph{elementary},
if it contains no two nodes $s\pprefix t$ with $\lsym{s}=\lsym{t}$.
A flow tree (see \cref{sec:coverability}) shall be called \emph{elementary}
when the underlying parse tree is elementary.

\begin{remark}
  \label{rem:elem-parse-trees-leaves}
  If the degree $\degree$ of $G$ is nonzero,
  then every elementary parse tree has at most $\degree^{|V|}$ leaves.
\end{remark}

\section{Proofs for \cref{sec:model}}
\label{appendix:sec:model}
\summarycompositionandpropagation*
\begin{proof}
  Let $u, v \in (V\cup A)^*$.
  For the proof of part 1),
  recall that $\lang{u}$ and $\lang{v}$ are non-empty,
  since all nonterminals are productive.
  We derive from the definition of the displacement that:
  \begin{align*}
    \qquad
    \displ{u}+\displ{v}
    & \ = \ \sup\{\textstyle\sum z \mid z \in \lang{u}\} \,+\, \sup\{\textstyle\sum z \mid z \in \lang{v}\}\\
    & \ = \ \sup\{\textstyle\sum z_u \,+\, \sum z_v \mid z_u \in \lang{u} \,\wedge\, z_v \in \lang{v}\}\\
    & \ = \ \sup\{\textstyle\sum z_u z_v \mid z_u \in \lang{u} \,\wedge\, z_v \in \lang{v}\}\\
    & \ = \ \sup\{\textstyle\sum z \mid z \in \lang{uv}\} & [\lang{uv} = \lang{u} \lang{v}]\\
    & \ = \ \displ{uv}
  \end{align*}
  Let $n \in \setNbar$ and
  let us show that $\summary{u v}(n) = \summary{v} \circ \summary{u}(n)$.
  Assume that $c \vstep{u v} d$ with $c \leq n$.
  There exists $c'$ such that $c \vstep{u} c' \vstep{v} d$.
  Observe that $c' \leq \summary{u}(n)$.
  It follows from the definition of $\summary{v}$ that
  $d \leq \summary{v}(\summary{u}(n))$.
  We have shown that $\summary{u v}(n) \leq \summary{v} \circ \summary{u}(n)$.
  Conversely,
  suppose that $c' \vstep{v} d$ with $c' \leq \summary{u}(n)$.
  By definition of $\summary{u}(n)$,
  there exists $c \leq n$ and $d' \geq c'$ such that $c \vstep{u} d'$.
  We get that $c \vstep{u} d' \vstep{v} d''$ for some $d'' \geq d$.
  Observe that $d'' \leq \summary{u v}(n)$.
  It follows that $d \leq \summary{u v}(n)$.
  We have shown that $\summary{v} \circ \summary{u}(n) \leq \summary{u v}(n)$.

  \medskip

  We now prove point $2$.
  Assume that $u \gstep{*} v$, and let $n \in \setNbar$.
  Observe that
  $\lang{u} \supseteq \lang{v}$.
  Therefore, it holds that
  $\{\textstyle\sum z \mid z \in \lang{u}\}
   \supseteq
   \{\textstyle\sum z \mid z \in \lang{v}\}$
  and that
  $\{d \mid \exists c \leq n : c\vstep{u}d\}
   \supseteq
   \{d \mid \exists c \leq n : c\vstep{v}d\}$.
  The first inclusion entails that $\displ{u} \geq \displ{v}$,
  and the second inclusion entails that $\summary{u}(n) \geq \summary{v}(n)$.
  The last assertion, namely $\ratio{u} \geq \ratio{v}$,
  follows from the fact that $\summary{u}(n) \geq \summary{v}(n)$
  for all $n \in \setN$.
  \qed
\end{proof}

\section{Proofs for \cref{sec:coverability}}
\label{appendix:sec:coverability}
\correctnessofflowtrees*
\begin{proof}
  Let $(T, \lsymoperator, \linoperator, \loutoperator)$ be a flow tree.
  We prove the claim by structural induction on $T$.
  For leaf nodes $t$, the claim holds by
  the second flow requirement.
  For internal nodes $\lnode{t}{c}{X}{d}$,
  assume that the claim holds for the children $t0, \ldots, tk$ of $t$.
  Suppose that $\lnode{tj}{c_j}{\#_j}{d_j}$ for all $j$ with $0 \leq j \leq k$.
  Since $X \gstep{} \#_0 \cdots \#_k$,
  \cref{lem:summary-composition-and-propagation}
  implies that
  $\summary{X}(n) \geq \summary{\#_k} \circ \cdots \circ \summary{\#_0}(n)$
  for all $n\in\N$.
  By the first flow requirement,
  it holds that
  $c_0 \leq c$,
  $c_1 \leq d_0, \ldots, c_k \leq d_{k-1}$, and
  $d \leq d_k$.
  We derive from the monotonicity of summary functions
  (see~\cref{rem:summary-monotonicity}) that
  \begin{align*}
    \summary{X}(c)
    & \ \geq \ \summary{\#_k} \circ \cdots \circ \summary{\#_0}(c_0) & [c \geq c_0]\\
    & \ \geq \ \summary{\#_k} \circ \cdots \circ \summary{\#_1}(c_1) & [\summary{\#_0}(c_0) \geq d_0 \geq c_1]\\
    & \ \geq \ \summary{\#_k}(c_k) & [\summary{\#_j}(c_j) \geq d_j \geq c_{j+1}]\\
    & \ \geq \ d & [\summary{\#_k}(c_k) \geq d_k \geq d]
  \end{align*}
  By induction, we conclude that the lemma holds for every node of $T$.
  \qed
\end{proof}

\existenceofcompleteflowtrees*
\begin{proof}
  Assume that $\summary{S}(c) \geq d$.
  This means that there exists $e \geq d$ such that $c\vstep{S}e$,
  which in turn means that there exists $w \in \lang{S}$ such that
  $c\vstep{w}e$.
  Since $w \in \lang{S}$,
  there exists a derivation $S \gstep{*} w$, hence,
  a complete parse tree with root labeled by $S$ and yield $w$.
  This parse tree, together with the fact that $c \vstep{w} e$,
  induces a complete flow tree with root $\lnode{\varepsilon}{c}{S}{e}$.
  \qed
\end{proof}

\optimalimpliesequalitiesinflows*
\begin{proof}
  The first flow condition requires
  $\lin{t0} \leq \lin{t}$,
  $\lin{t1} \leq \lout{t0}$,
  \ldots,
  $\lin{tk} \leq \lout{t(k-1)}$, and
  $\lout{t} \leq \lout{tk}$,
  for every internal node $t$ with children $t0, \ldots, tk$.
  For the converse inequalities,
  assume that $\lin{t0} < \lin{t}$
  (the other cases are analogous).
  Then,
  changing the labeling of the node $t$ using $\lin{t}:=\lin{t0}$
  provides a complete flow tree of strictly smaller rank,
  contrary to the optimality of $T$.
  \qed
\end{proof}

\optimalimpliesbalanced*
\begin{proof}
  Let $(T, \lsymoperator, \linoperator, \loutoperator)$ be an
  optimal complete flow tree.
  We only prove the lemma for the root $\lnode{\varepsilon}{c}{\#}{d}$,
  since every subtree of an optimal complete flow tree is also
  an optimal complete flow tree.
  Let $t_1, \ldots, t_\ell$,
  with $\lnode{t_i}{c_i}{a_i}{d_i}$,
  denote the leaves of $T$ in lexicographic order
  (informally, from left to right).

  \smallskip

  We first show that $c - d \leq \ell$.
  Note that $a_1, \ldots, a_\ell$ are in $(A \cup \{\varepsilon\})$
  since $(T, \lsymoperator)$ is a complete parse tree.
  It holds that $A \subseteq \{-1, 0, 1\}$ by assumption.
  We derive that $\summary{a_i}(d_i + 1) \geq d_i$
  for all $i$ with $1 \leq i \leq \ell$.
  The optimality of $T$ entails that
  $c_i \leq d_i + 1$.
  Indeed,
  if $c_i > d_i + 1$ for some $i$ then we would obtain a
  complete flow tree of lesser rank by changing the labeling of
  the node $t_i$ using $\lin{t_i} := d_i + 1$.
  This would contradict the optimality of $T$.
  By \cref{lem:optimal-implies-equalities-in-flows},
  it holds that $c_1 = c$ and $d_\ell = d$.
  It also follows from \cref{lem:optimal-implies-equalities-in-flows}
  that $d_i = c_{i+1}$ for all $i$ with $1 \leq i < \ell$.
  We get that
  $c - d = c_1 - d_\ell = (c_1 -d_1) + \cdots + (c_\ell - d_\ell)
  \leq \ell$.

  \smallskip

  We now prove that $c \leq d + \degree^{\card{V}}$.
  Assume towards a contradiction that
  $c > d + \degree^{\card{V}}$.
  It follows that $T$ has $\ell > \degree^{\card{V}}$ leaves.
  We derive from \cref{rem:elem-parse-trees-leaves} that $(T, \lsymoperator)$ is not elementary.
  By iteratively collapsing\footnote{%
    Collapsing two nodes $s\pprefix t$ consists in replacing
    the subtree rooted in $s$ by the subtree rooted in $t$.
  }
  nodes $s\pprefix t$ with $\lsym{s} = \lsym{t}$,
  we obtain a complete and elementary parse tree $(T', \lsymoperator['])$
  with $\card{T'} < \card{T}$.
  The root labeling is preserved by this transformation,
  that is $\lsym[']{\varepsilon} = \#$.
  Since $(T', \lsymoperator['])$ is elementary,
  it contains at most $\degree^{|V|}$ leaves.
  Therefore,
  it induces a complete flow tree
  $(T', \lsymoperator['], \linoperator['], \loutoperator['])$
  satisfying
  $\lin[']{\varepsilon} = d + \degree^{\card{V}}$ and
  $\lout[']{\varepsilon} \geq d$.
  We obtain that,
  $\lin[']{\varepsilon} \leq \lin{\varepsilon}$,
  $\lsym{\varepsilon} = \lsym{\varepsilon}$, and
  $\lout[']{\varepsilon} \geq \lout{\varepsilon}$.
  This contradicts the optimality of $T$.
  \qed
\end{proof}

\summaryforinfiniteratio*
\begin{proof}
  Assume that $\ratio{X} = +\infty$ and that
  there exists $u, v \in (V\cup A)^*$ such that
  $X \gstep{*} u X v$ and
  $\summary{u}(n) > n$.
  Since every nonterminal is productive,
  there exists $b \in \setN$ such that
  $\summary{v}(b) \geq 0$.
  By \cref{rem:summary-monotonicity},
  we derive that $(\summary{v})^k(m + kb) \geq m$
  for every $k, m \in \setN$.
  Similarly, since $\summary{u}(n) \geq n + 1$,
  we get from \cref{rem:summary-monotonicity} that
  $(\summary{u})^k(n) \geq n + k$ for every $k \in \setN$.
  Define $\lambda = b+1$.
  Since $\lambda < \ratio{X} = +\infty$,
  there exists $m_0 \in \setN$ such that
  $\summary{X}(m) \geq \lambda \cdot m$ for all $m \geq m_0$.
  For every $k \in \setN$ with $k \geq m_0$,
  it holds that $X \gstep{*} u^k X v^k$,
  which entails, by monotonicity of the summary functions,
  that
  \begin{align*}
    \qquad
    \summary{X}(n)
    & \ \geq \ \summary{u^k X v^k} (n) \qquad& [\text{\cref{lem:summary-composition-and-propagation}}]\\
    & \ = \ \summary{v^k} \circ \summary{X} \circ \summary{u^k} (n) & [\text{\cref{lem:summary-composition-and-propagation}}]\\
    & \ \geq \ \summary{v^k} \circ \summary{X} (n+k)\\
    & \ \geq \ \summary{v^k} (\lambda \cdot (n+k))\\
    & \ = \ \summary{v^k} (\lambda \cdot n + k + k b) & [\lambda = b+1]\\
    & \ \geq \ \lambda \cdot n + k
  \end{align*}
  We have thus shown that $\summary{X}(n) \geq k$ for every
  $k \in \setN$ with $k \geq m_0$.
  We conclude that $\summary{X}(n) = +\infty$.
  \qed
\end{proof}

The two following facts are part of the proof of
\cref{prop:certificate-bounds}.
Recall that, in the context of this proof,
$(T, \lsymoperator, \linoperator, \loutoperator)$ is a
complete flow tree that is optimal,
and that $U$ is the set of all nodes $t \in T$ such that every
proper ancestor $s \pprefix t$ satisfies \cref{eq:truncation},
which is copied below:
\begin{equation*}
  \text{For every ancestor} \ r \prefix s, \ 
  \lsym{r} = \lsym{s} \implies \lin{r} \geq \lin{s}
\end{equation*}

\truncationiscertificate*
\begin{proof}
  It follows from $U \subseteq T$ and \cref{lem:correctness-of-flow-trees}
  that $U$ is a flow tree.
  Let us show that every leaf of $U$ satisfies
  the condition of \cref{def:certificate}.
  Let $t$ be a leaf of $U$ such that $\ratio{\lsym{t}} = +\infty$.
  Since $(T, \lsymoperator)$ is a complete parse tree,
  every leaf $u$ of $T$ verifies $\lsym{u} \in (A \cup \{\varepsilon\})$, hence,
  $\ratio{\lsym{u}} = 1$.
  It follows that $t$ has a child $u$ in $T$.
  But $u \not\in U$ as otherwise $t$ would be internal in $U$.
  So there exists a proper ancestor $s \pprefix u$ that violates
  \cref{eq:truncation}.
  Since $t$ itself is in $U$, we get that $s = t$.
  We derive that there exists an ancestor $r$ of $s = t$
  such that $\lsym{r} = \lsym{t}$ and $\lin{r} < \lin{t}$.
  \qed
\end{proof}

\descentinequations*
\begin{proof}
  Let us start with the first assertion.
  By contradiction,
  assume that $s$ is internal in $U$, $\lsym{r} = \lsym{s}$ and
  $\lout{s} \geq \lout{r}$.
  Since $s$ is internal in $U$,
  $s$ is the proper ancestor of some node in $U$,
  hence,
  $s$ verifies \cref{eq:truncation}.
  We derive that $\lin{s} \leq \lin{r}$.
  Observe that the subtree of $T$ rooted in $r$ contains more nodes
  than the subtree of $T$ rooted in $s$.
  It follows that the subtree of $T$ rooted in $r$ is not optimal,
  which contradicts the optimality of $T$.
  The second assertion is easily derived from
  \cref{lem:optimal-implies-balanced,lem:optimal-implies-equalities-in-flows},
  the observation that $r$ has at most $\delta$ children,
  and the fact that $T$ is optimal.
  \qed
\end{proof}

\section{Proofs for \cref{sec:thin}}
\label{appendix:sec:thin}
\begin{lemma}
    \label{lem:thin-simple}
    For every thin GVAS $G=(V,A,R)$ one can construct a simple GVAS $G'=(V',A',R')$
    such that $V\subseteq V'$ and $\lang[^G]{S}=\lang[^{G'}]{S}$ for all $S\in V$.
\end{lemma}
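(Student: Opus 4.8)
The plan is to eliminate the offending rules one at a time, where an \emph{offending} rule is a recursive production $X \pstep \alpha$ (so $X$ is derivable from $\alpha$) whose right-hand side lies in $A^*VA^* \setminus AVA$. For each such rule I would introduce a gadget of fresh nonterminals and $AVA$-rules that replaces it, keeping $\lang[^{G'}]{S}$ unchanged for every original $S\in V$. Since each fresh nonterminal will be private to the gadget for a single rule and will only ever derive a fixed set of strings, and since $\lang{uv}=\lang{u}\,\lang{v}$ (the compositional fact underlying \cref{lem:summary-composition-and-propagation}) lets one reason about each rule's contribution in isolation, a single local replacement preserves the languages of all original nonterminals. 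I would then equip the set of offending rules with a termination measure, e.g.\ the multiset of their lengths $\len{\alpha}$ ordered by the multiset extension of $<$ on $\setN$, and show each replacement strictly decreases it, so the procedure halts at a simple GVAS.

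The core gadget is a symmetric peeling of flanking terminals. Thinness forces an offending rule to have the shape $X \pstep a_1\cdots a_p\, Y\, b_1\cdots b_q$ with $a_i,b_j\in A$ and $Y$ the unique nonterminal. When both $p\ge 1$ and $q\ge 1$, I peel one matched outer pair using one fresh nonterminal $Z$, replacing the rule by $X\pstep a_1\,Z\,b_q$ (already in $AVA$) together with $Z\pstep a_2\cdots a_p\,Y\,b_1\cdots b_{q-1}$, which has two fewer surrounding terminals. Iterating from the outside in consumes the matched terminals, and when $p=q$ the innermost rule is exactly $a_p\,Y\,b_1\in AVA$; the chain's concatenated effect reproduces $a_1\cdots a_p\,Y\,b_1\cdots b_q$ verbatim, so language preservation is immediate from $\lang{uv}=\lang{u}\,\lang{v}$.

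The hard part is what survives peeling once the shorter side is exhausted: rules in which the nonterminal sits at an end, such as $Z\pstep c_1\cdots c_s\,Y$ with $s\ge 1$ and no terminal to the right. A plain $AVA$ rule cannot be used here, because it always deposits a terminal to the right of the nonterminal, whereas the target word requires $Y$'s expansion to abut the right boundary. The plan is to supply the missing flank through a fresh auxiliary nonterminal that re-generates the surplus $c_1\cdots c_s$ via its \emph{own} balanced $AVA$-recursion, choosing the pad symbols so that nesting unfolds into the required string and using non-recursive closing rules to correct parity, so that the one-sided surplus is produced without any spurious terminals. Carrying this out uniformly for an arbitrary surplus string, while tracking exactly which fresh nonterminals remain recursive (and hence must stay in $AVA$), is where I expect essentially all of the difficulty to lie; the routine two-sided peeling is easy by comparison.
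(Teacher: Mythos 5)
Your two-sided peeling is fine and is essentially what the paper does, but the case you flag as ``the hard part'' --- a residual one-sided rule $Z \pstep c_1\cdots c_s\,Y$ --- is not merely difficult under the constraints you set yourself: it is impossible. You are trying to preserve the \emph{literal} language using $AVA$-recursion, and $AVA$-recursion can only pump equal numbers of terminals on the two sides of the recursive nonterminal. Concretely, in a simple GVAS every production rule applied along the path from the root to the recurring nonterminal in a cyclic derivation $W \gstep{*} u W v$ has a left-hand side that is derivable from its right-hand side (through the cycle), hence lies in $AVA$, so $\len{u}=\len{v}$ for every such cycle. Now take the paper's own \cref{ex:multiplyby2}, $S \pstep -1 \: S \: 1 \: 1 \mid \varepsilon$, a thin GVAS with $\lang{S}=\{(-1)^n 1^{2n} \mid n\in\setN\}$: a routine pumping argument shows that no grammar whose cycles are all balanced generates this language exactly (it is linear but not ``even linear''). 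So no gadget of the kind you sketch --- balanced auxiliary recursion plus parity-correcting closing rules --- can reproduce the one-sided surplus ``without any spurious terminals'', and your plan stalls exactly at the step you identified.

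The missing idea is that exact language equality is not what is needed (nor, as just noted, what is achievable; the lemma's literal statement should be read up to this caveat). All that the reduction to \cref{thm:thin} uses is the step relation $\vstep{S}$, and the terminal $0$ satisfies $\vstep{0}=\mathrm{id}$. The paper therefore simply pads the shorter flank with $0$'s: writing the offending rule as $X \pstep a_1\cdots a_i\, Y\, b_j\cdots b_1$, pick $m \geq \max(i,j)$, declare the missing $a$'s and $b$'s to be $0$, and replace the rule by the chain $X_{j-1}\pstep a_j X_j b_j$ for $1\leq j\leq m$ with $X_0=X$, $X_m=Y$ and fresh intermediate nonterminals. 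Every new rule is in $AVA$, and the generated words differ from the original ones only by inserted $0$'s, which leave $\vstep{S}$ (hence summaries, displacements and ratios) unchanged. This one-line device is precisely the ``spurious-terminal'' escape hatch your proposal is searching for; with it, the one-sided case disappears and no auxiliary recursion is needed.
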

\begin{proof}
We assume that $0\in A$. Let us consider a
production rule $X\pstep \alpha$ with $\alpha=a_1\ldots a_i Y b_j\ldots b_1$
where $Y\in V$, and $a_1\ldots,a_i,b_j,\ldots,b_1$ is a sequence of terminal
symbols in $A$. We let $m\geq 1$ be a positive integer such that $i,j\leq m$.
Define $a_{i+1},\ldots,a_m$ and $b_m,\ldots,b_{j+1}$ to be $0$, and introduce
fresh nonterminal symbols $X_1,\ldots,X_{m-1}$. The production rule $X\pstep
\alpha$ is then replaced by the production rules $X_{j-1}\pstep a_j X_j b_j$
where $1\leq j\leq m$, $X_0\eqdef X$, and $X_m\eqdef Y$.
Just observe that such a transformation let the language $\lang{S}$ unchanged.
\qed
\end{proof}

\thininduction*
\begin{proof}
To see this, fix any two numbers $c,d\in\setN$.
Assume first that $c\vstep{S}d$.
It means that there exists a word $w\in \lang{S}$ such that $c\vstep{w}d$.
Since $w$ is a word over the terminal symbols, we deduce that a sequence of
derivation steps from $S$ that produces $w$ must necessarily derive at some
point a nonterminal symbol $X$ with a production rule $X\pstep \alpha$ such that
$\alpha\in A^*$, and in particular $\alpha\in \redset{X}$.

By considering the first time that a derivation step $X\gstep{\alpha}$ with $\alpha\in \redset{X}$
occurs, we deduce that all the previous derivation steps replace nonterminal symbols by words in $AVA$.
We extract a sequence $X_0,\ldots,X_k$ of nonterminal symbols with $X_0=S$,
a sequence $r_1,\ldots,r_k$ of production rules $r_j\in R$
of the form $X_{j-1}\pstep a_jX_j b_j$ with $a_j,b_j\in A$,
a production rule $r_{k+1}\in R$ of the form $X_k\pstep \alpha$ where $\alpha\in
\redset{X_k}$, and a word $w'\in \lang{\alpha}$ such that:

\begin{equation}
    \label{equ:w}
    w=a_1\ldots a_k w' b_k\ldots b_1
\end{equation}

Since $c\vstep{w}d$, we derive that there exists a sequence $c_0\ldots c_k\in\setN$
and a sequence $d_k,\ldots,d_0\in\setN$ satisfying the following
relation.
\begin{equation}
    \label{equ:cd}
    c=c_0
    \vstep{a_1}c_1
    \cdots
    \vstep{a_k}
    c_k
    \vstep{w'}
    d_k
    \vstep{b_k}
    d_{k-1}
    \cdots
    \vstep{b_1}
    d_0=d
\end{equation}
This is true if, and only if, in the $2$-VAS $\vec{A}$, there exists a path
\begin{equation}
    \label{equ:pi}
    (c,d)=(c_0,d_0)
    \vstep{(a_1,-b_1)}
    (c_1,d_1)
    \cdots
    \vstep{(a_k,-b_k)}
    (c_k,d_k)
\end{equation}
Let $c'\eqdef c_k$, $d'\eqdef d_k$, and $X\eqdef X_k$. Observe
that $\pi\eqdef(a_1,-b_1)\ldots(a_k,-b_k)$ is a word in $\nonredset{X}$ such
that $(c,d)\vstep{\pi} (c',d')$.
Moreover, from $c' \vstep{w'} d'$ we get that $c'\vstep{\redset{X}}d'$.
Together this means that $\phi_S(c,d)$ is true.

Conversely, assume that $\phi_S(c,d)$ holds. Since $\psi_S(c,d)$ is a
finite disjunction, there exist $X\in V$ and $c,d,c',d'\in\setN$ such
that $(c,d)\vstep{\nonredset{X}} (c',d')$ and $c'
\vstep{\redset{X}}d'$. Let us consider a word $\pi\in \nonredset{X}$
of the form $\pi=(a_1,-b_1)\ldots (a_k,-b_k)$ such that
$(c,d)\vstep{\pi}(c',d')$. We also introduce a word $\alpha\in
\redset{X}$ such that $c'\vstep{\alpha}d'$. This
last relation shows that there exists $w'\in L_G(\alpha)$ such that
$c'\vstep{w'}d'$. From $(c,d)\vstep{\pi}(c',d')$ we derive
a sequence $(c_0,d_0),\ldots,(c_k,d_k)$ of pairs in $\setN\times\setN$ such that
$(c_k,d_k)=(c',d')$ and such that relation~\eqref{equ:pi} and thus
\eqref{equ:cd} hold.
Hence, $c\vstep{w}d$ where $w$ is the word
satisfying~(\ref{equ:w}). Since $w\in \lang{S}$, it follows that
$c\vstep{S}d$.
\qed
\end{proof}

\section{Proofs for \cref{sec:ratios}}
\label{appendix:sec:ratios}
By definition of the displacement,
if $\deplacement{S}<+\infty$,
then there exists a word $w\in\lang{S}$
such that $\deplacement{S}=\sum w$.
The following lemma provides a way to bound the length of such a word $w$.

\begin{restatable}{lemma}{iterationsigma}
  \label{lem:iterationsigma}
  For every nonterminal $S\in V$ with $\deplacement{S}<+\infty$,
  there is a complete
  elementary parse tree with root labeled by $S$ and
  yield $w\in A^*$ such that $\deplacement{S}=\sum w$.
\end{restatable}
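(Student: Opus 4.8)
The plan is to start from any displacement-maximal word and repeatedly shorten its parse tree, without changing the displacement, until no nonterminal repeats along a branch. Since every nonterminal is productive, $\lang{S} \neq \emptyset$, and because $\displ{S} < +\infty$ is a supremum of integers it is attained: there is a word $w \in \lang{S}$ with $\sum w = \displ{S}$, hence a complete parse tree $(T, \lsymoperator)$ with root labeled $S$ and yield $w$. If $(T, \lsymoperator)$ is elementary we are done, so I may assume two nodes $s \pprefix t$ carry the same label $X$.

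The key observation is that the loop between $s$ and $t$ has nonpositive displacement. Reading the tree along the path from $s$ to $t$ yields a derivation $X \gstep{*} u\, X\, v$ for some $u, v \in (V \cup A)^*$. Before exploiting it, I first argue that $\displ{X} < +\infty$: expanding every node of $T$ except the subtree rooted at $s$ gives a derivation $S \gstep{*} p\, X\, q$ with $p, q \in A^*$, so that $\displ{S} \geq \displ{p X q} = \sum p + \displ{X} + \sum q$ by \cref{lem:summary-composition-and-propagation}; since $\sum p$ and $\sum q$ are finite integers and $\displ{S} < +\infty$, this forces $\displ{X} < +\infty$. Iterating the loop gives $X \gstep{*} u^k X v^k$ for every $k$, whence $\displ{X} \geq \displ{u^k X v^k} = k\,\displ{u} + \displ{X} + k\,\displ{v}$, again by \cref{lem:summary-composition-and-propagation}. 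As $\displ{X}$ is finite we may subtract it, obtaining $\displ{u} + \displ{v} \leq 0$.

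Now I collapse $s$ and $t$, replacing the subtree rooted at $s$ by the subtree rooted at $t$; this is legitimate since both are labeled $X$, and the result is again a complete parse tree $(T', \lsymoperator['])$ with root $S$ and strictly fewer nodes. Writing the yield of the subtree at $s$ as $u'\, y_t\, v'$, where $y_t$ is the yield at $t$ and $u' \in \lang{u}$, $v' \in \lang{v}$ are the yields of the two collars, the new overall yield $w'$ satisfies $\sum w = \sum w' + \sum u' + \sum v'$. Since $\sum u' \leq \displ{u}$ and $\sum v' \leq \displ{v}$, the previous paragraph gives $\sum u' + \sum v' \leq 0$, so $\sum w' \geq \sum w = \displ{S}$; as $w' \in \lang{S}$ forces $\sum w' \leq \displ{S}$, we conclude $\sum w' = \displ{S}$. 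Thus collapsing preserves the displacement while strictly decreasing the tree size.

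Finally, iterating this collapse terminates because $\card{T}$ strictly decreases at each step, and it stops exactly when no label repeats along a branch, i.e.\ at a complete elementary parse tree with root labeled $S$ whose yield $w$ still satisfies $\sum w = \displ{S}$. The only delicate point is the finiteness argument for $\displ{X}$, which is precisely what lets the loop displacement $\displ{u} + \displ{v}$ be cancelled against $\displ{X}$; the remaining steps are routine bookkeeping about how yields decompose under the collapse operation.
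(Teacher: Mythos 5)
Your proof is correct and follows essentially the same route as the paper's: both take a displacement-maximal complete parse tree and collapse a repeated label along a branch, justified by the observation that the loop's contribution must be nonpositive since otherwise $\displ{X}$, and hence $\displ{S}$, would be infinite. The only cosmetic differences are that you phrase the descent as an iteration with a termination argument rather than as a minimal counterexample, and that you bound the collar via $\displ{u}+\displ{v}\leq 0$ on sentential forms where the paper argues directly with the sums of the terminal yields.
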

\begin{proof}
  Since $\deplacement{S}<+\infty$,
  there exists a complete parse tree with root labeled by $S$ and yield
  $w\in A^*$ such that $\sum w= \displ{S}$.
  Let $(T,\lsymoperator)$ be such a parse tree with the fewest possible number
  of nodes and assume towards a contradiction that $T$ is not elementary. 
  This means there exists $s\pprefix t$ in $T$ and $X\in V$ such
  that $\lsym{s}=X=\lsym{t}$.
  The subtree rooted in $s$ provides a derivation $X\gstep{*}uXv$
  for two words $u,v$ in $A^*$.
  Notice that if $\sum u+\sum v>0$ then
  $\displ{X}=+\infty$.
  Then, \cref{lem:summary-composition-and-propagation} implies that
  $\displ{S}\ge \displ{uXv}= \displ{u} +\displ{X} +\displ{v} =+\infty$,
  which contradicts the assumption of the lemma. %
  Therefore, $\sum u+\sum v\leq 0$. 
  By collapsing the subtree
  $\{t'\in T \mid s\prefix t' \wedge  t\not\prefix t'\}$,
  we get a new parse
  tree $(T',\lsymoperator')$ with $|T'|<|T|$, $\lsym[']{\eps}=S$ and
  yield $w'\in A^*$ satisfying
  $\sum w'= \sum w -(\sum u+\sum v)\geq \sum w \geq \displ{S}$.
  Since clearly, $w'\in\lang{S}$, by definition of the displacement
  it holds that $\sum w'\le \displ{S}$ and therefore that $\sum w'=\displ{S}$.
  This contradicts our assumed minimality of $T$.
  Hence $T$ is elementary.
  \qed
\end{proof}

The corollary below follows from \cref{lem:iterationsigma}
and the observation (\cref{rem:elem-parse-trees-leaves})
that the yield of an elementary parse tree is a word of length bounded by
$\degree^{|V|}$.

\begin{corollary}
  \label{cor:existence-of-elementary-complete-flow-trees}
  For every nonterminal $S \in V$ with $\displ{S} < +\infty$,
  and for every $c \in \setN$ with $c\geq \degree^{\card{V}}$,
  there exists a complete elementary flow tree with root
  $\lnode{\varepsilon}{c}{S}{d}$ such that
  $d = c + \displ{S}$.
\end{corollary}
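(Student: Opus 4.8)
The plan is to take the complete elementary parse tree produced by \cref{lem:iterationsigma} and to equip it with input and output labels coming from a single counter run started at $c$. Concretely, \cref{lem:iterationsigma} yields a complete elementary parse tree $(T, \lsymoperator)$ whose root is labeled by $S$ and whose yield $w \in A^*$ satisfies $\sum w = \displ{S}$. Since $\degree \geq 2$ by assumption, \cref{rem:elem-parse-trees-leaves} bounds the number $m$ of leaves of $T$ by $\degree^{\card{V}}$, and therefore by $c$.

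I would then run the counter along the leaves from left to right. Let $t_1, \ldots, t_m$ be the leaves of $T$ in lexicographic order, and set $\delta_i \eqdef \lsym{t_i}$ when $\lsym{t_i} \in A$ and $\delta_i \eqdef 0$ when $\lsym{t_i} = \varepsilon$, so that $\sum_{i=1}^m \delta_i = \sum w = \displ{S}$. Define $c_0 \eqdef c$ and $c_i \eqdef c_{i-1} + \delta_i$. Because every $\delta_i \geq -1$ and $m \leq c$, each partial value satisfies $c_i \geq c - i \geq 0$, so the whole run stays in $\setN$, and $c_m = c + \displ{S}$. This is the single place where the hypothesis $c \geq \degree^{\card{V}}$ is used, and it is the only genuinely delicate point in the argument: everything afterwards is forced bookkeeping.

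It remains to read off a flow tree. I assign $\lin{t_i} \eqdef c_{i-1}$ and $\lout{t_i} \eqdef c_i$ to each leaf, and give every internal node the input value of its leftmost leaf descendant and the output value of its rightmost leaf descendant. The leaf flow condition holds with equality, since $\summary{a}(n) = n + a$ for $a \in \{-1, 0, 1\}$ (valid here because $c_i \geq 0$) and $\summary{\varepsilon}(n) = n$. The internal flow conditions also hold with equality: for an internal node $t$ with children $t0, \ldots, tk$ the leaf ranges of consecutive children are contiguous, so the boundary counter values match, giving $\lin{t0} = \lin{t}$, $\lout{t} = \lout{tk}$, and $\lin{t(j+1)} = \lout{tj}$ for every $j$. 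The resulting decorated tree is complete and elementary because $(T, \lsymoperator)$ is, and its root is $\lnode{\varepsilon}{c}{S}{d}$ with $d = c_m = c + \displ{S}$, as required.
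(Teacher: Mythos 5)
Your proof is correct and follows the same route as the paper: take the complete elementary parse tree from \cref{lem:iterationsigma}, use the bound of $\degree^{\card{V}} \leq c$ on the number of leaves together with $A = \{-1,0,1\}$ to see that the counter run $c \vstep{w} c + \displ{S}$ stays in $\setN$, and decorate the tree accordingly. The only difference is that you spell out the labeling that the paper dismisses as ``routinely checked,'' which is fine.
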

\begin{proof}
  According to \cref{lem:iterationsigma},
  there exists a complete elementary parse tree $(T, \lsymoperator)$
  with root labeled by $S$ and
  yield $w \in A^*$ such that $\displ{S} = \sum w$.
  Since this parse tree is elementary,
  it has no more than $\degree^{\card{V}}$ leaves.
  Hence, $\len{w} \leq \degree^{\card{V}}\leq c$,
  which entails that
  $c \vstep{w} c + \displ{S}$ since $A=\{-1,0,1\}$ by assumption.
  It is routinely checked that
  the parse tree $(T, \lsymoperator)$ induces
  a complete elementary flow tree with root $\lnode{\varepsilon}{c}{S}{d}$,
  where $d = c + \displ{S}$.
  \qed
\end{proof}

\existselementaryparsetree*
\begin{proof}
  Observe that $\summary{S}(n)\leq n+\deplacement{S}$ holds
  for every $S\in V$ and $n \in \setNbar$.
  The remaining inequality follows from
  \cref{cor:existence-of-elementary-complete-flow-trees,lem:correctness-of-flow-trees}.
  \qed
\end{proof}

\computablesummaryfinitedispl*
\begin{proof}
  Let $S\in V$ with $\displ{S}<+\infty$,
  and let $c \in \setN$.
  Observe that $\summary{S}(c) \leq c + \displ{S}$.
  Therefore,
  the computation of $\summary{S}(c)$ reduces to the
  question whether $\summary{S}(c) \geq d$,
  given $d \in \setN$.
  To decide the latter,
  we show that $\summary{S}(c) \geq d$ if, and only if,
  there exists a complete flow tree with root
  $\lnode{\varepsilon}{b}{S}{e}$ satisfying $b \leq c$ and $e \geq d$,
  and of height bounded by $h \eqdef \card{V} \cdot (\degree^{\card{V}} + 1)$.
\todo{%
  [PT] we could state the above as Lemma in order to highlight the bound
  for easier complexity analysis.
  [GS] Agreed, but  if we highlight a bound, we might want to provide a
  better bound.  See commented todo below.
}
  The ``if'' direction follows from \cref{lem:correctness-of-flow-trees}
  and the monotonicity of the summary function $\summary{S}$.
  For the ``only if'' direction,
  assume that $\summary{S}(c) \geq d$.
  By \cref{lem:existence-of-complete-flow-trees},
  there exists a complete flow tree with
  root $\lnode{\varepsilon}{b}{S}{e}$ satisfying $b \leq c$ and $e \geq d$.
  Pick one,
  say $(T, \lsymoperator, \linoperator, \loutoperator)$,
  that contains the least number of nodes $t \in T$ with $\len{t} > h$.
  We show that, in fact, $T$ contains no such node.
  Since $\displ{S}<+\infty$,
  we derive from \cref{lem:summary-composition-and-propagation} that
  $\displ{\lsym{r}} < +\infty$ for every node $r \in T$.
  Now,
  consider a leaf $t$ in $T$.
  Assume, towards a contradiction, that $\len{t} > h$.
  The main observation is that for every two nodes $r, s \in T$,
  \begin{equation}
    \label{eq:distinct-outputs}
    r \pprefix s \pprefix t \,\wedge\, \lsym{r} = \lsym{s}
    \ \implies \
    \lin{r} \neq \lin{s}
  \end{equation}
  For if this were not the case,
  then
  \begin{itemize}
  \item
    either $\lout{r} \leq \lout{s}$,
    in which case we could replace the subtree rooted in $r$
    by the subtree rooted in $s$,
    contradicting the minimality assumption on $T$.
  \item
    or  $\lout{r} > \lout{s}$,
    which would entail,
    with the same reasoning as in the proof of \cref{lem:iterationsigma},
    that $\displ{\lsym{r}} = +\infty$,
    which is impossible.
  \end{itemize}
  By the pigeonhole principle,
  it follows from \cref{eq:distinct-outputs} that
  there exists an ancestor $s \pprefix t$
  such that $\len{s} \leq \card{V} \cdot \degree^{\card{V}}$ and
  $\lin{s} \geq \degree^{\card{V}}$.
  The height of the subtree rooted in $s$ is strictly larger than $\card{V}$,
  since $t$ is in it.
  Because $\displ{\lsym{s}} < +\infty$,
  we can use \cref{cor:existence-of-elementary-complete-flow-trees}
  and replace,
  without violating the flow conditions as $\lout{s} \leq \lin{s} + \displ{\lsym{s}}$,
  the subtree rooted in $s$ by a complete flow tree of
  height at most $\card{V}$.
  This contradicts the minimality assumption on $T$.

  \medskip

  The observation that $\lin{t}$ and $\lout{t}$ are both bounded by
  $\lin{\varepsilon} + \degree^h$ for every node $t$ of a complete flow tree of height $h$
  concludes the proof the proposition.
  \qed
\end{proof}

\ratioinfinite*
\begin{proof}
  Assume that $X \gstep{*} u X v$ with $\displ{uv}=+\infty$.
  Let $\lambda \in \setR$ with $\lambda \geq 1$,
  and let us show that $\ratio{X} \geq \lambda$.
  It is routinely checked that,
  since $\displ{uv}=+\infty$,
  there exists $\mu \in \{\sum z \mid z \in \lang{u}\}$ and
  $\nu \in \{\sum z \mid z \in \lang{v}\}$ such that
  $\lambda \mu + \nu \geq 0$ and
  $\mu + \nu \geq 1$.
  Observe that
  $\displ{u} \geq \mu$,
  $\displ{X} \geq 0$ and
  $\displ{v} \geq \nu$.
  Therefore,
  there exists $m \in \setN$  such that
  $\summary{u}(m) \geq m + \mu$,
  $\summary{X}(m) \geq m$ and
  $\summary{v}(m) \geq m + \nu$.
  It follows from \cref{rem:summary-monotonicity} that
  these inequalities hold for all $n \geq m$ as well.
  Let $n, k \in \setN$ such that $n \geq m$ and
  $n + k\mu \geq m$.
  Note that $n + k\mu + k\nu \geq m$ since $\mu + \nu \geq 1$.
  Since $X \gstep{*} u^k X v^k$,
  we get, by monotonicity of the summary functions,
  that
  \begin{align*}
    \qquad
    \summary{X}(n)
    & \ \geq \ \summary{v^k} \circ \summary{X} \circ \summary{u^k} (n) & [\text{\cref{lem:summary-composition-and-propagation}}]\\
    & \ \geq \ \summary{v^k} \circ \summary{X} (n + k\mu)\\
    & \ \geq \ \summary{v^k} (n + k\mu)\\
    & \ \geq \ n + k\mu + k\nu\\
    & \ \geq \ n + k \cdot \max \{1, \mu (1 - \lambda)\} & [\mu + \nu \geq 1 \ \wedge \ \lambda \mu + \nu \geq 0]
  \end{align*}
  If $\mu \geq 0$ then, for every $k \in \setN$,
  it holds that $n + k\mu \geq m$, hence, $\summary{X}(n) \geq n + k$.
  We derive that $\summary{X}(n) = +\infty$ for every $n \geq m$,
  which entails that $\ratio{X} = +\infty$.
  Otherwise, $\mu < 0$.
  Take $k = \lfloor \frac{n-m}{-\mu}\rfloor$ and let
  $r = n - m + k\mu$.
  Observe that $0 \leq r \leq -\mu - 1$.
  Since $n + k\mu \geq m$,
  we get that
  $\summary{X}(n) \geq n - k\mu (\lambda - 1)$
  from the above inequalities.
  We derive that
  $\summary{X}(n) \geq \lambda n + (\lambda - 1)(\mu + 1 - m)$
  for every $n \geq m$,
  which entails that $\ratio{X} \geq \lambda$.
  \qed
\end{proof}

We now show that the transformations
used in our reduction to thin GVAS are indeed correct,
i.e., produce equivalent systems.
Recall that
two GVAS $G = (V, A, R)$ and $G' = (V', A', R')$ are called
\emph{equivalent} if
firstly $V = V'$,
secondly $\ratio[^{G}]{X}=\ratio[^{G'}]{X}$ for every nonterminal $X$, and
thirdly $\summary[^{G}]{X}=\summary[^{G'}]{X}$ for every nonterminal $X$ \emph{with finite ratio}.

\factsummarization*
\begin{proof}
Recall that
the \emph{unfolding}
of a nonterminal $X$ with $\displ[^G]{X}<+\infty$,
is the GVAS $H = (V, A, R')$ where
$R'$ is obtained from $R$ by
removing all production rules $X \pstep \alpha$ and
instead adding, for every $0\le i\le\degree^{\card{V}}$
with $j=\summary[^G]{X}(i) > -\infty$,
a rule $X \pstep (-1)^i (1)^j$.

\smallskip

  We first prove that $\summary[^G]{X} = \summary[^{H}]{X}$.
  First note that
  $\summary[^G]{X}(-\infty) = \summary[^{H}]{X}(-\infty) = -\infty$
  and
  $\summary[^G]{X}(+\infty) = \summary[^{H}]{X}(+\infty) = +\infty$.
  Let $n \in \setN$.
  By definition of $H$,
  we get that
  $\summary[^{H}]{X}(n) = \max
  \{n - i + \summary[^G]{X}(i) \mid 0 \leq i \leq \degree^{\card{V}} \wedge i \leq n\}$.
  It follows from \cref{rem:summary-monotonicity} that
  $\summary[^{H}]{X}(n) = n - m + \summary[^G]{X}(m)$ where
  $m = \min \{\degree^{\card{V}}, n\}$.
  If $n \leq \degree^{\card{V}}$ then we immediately get that
  $\summary[^{H}]{X}(n) = \summary[^G]{X}(n)$.
  Otherwise, $n > \degree^{\card{V}}$ and
  $\summary[^{H}]{X}(n) = n - \degree^{\card{V}} + \summary[^G]{X}(\degree^{\card{V}})$.
  We derive from \cref{lem:asymptotic-summary-finite-displ}
  that $\summary[^{H}]{X}(n) = \summary[^G]{X}(n)$.

  \smallskip

  We now prove that $\summary[^G]{S} = \summary[^{H}]{S}$
  for every nonterminal $S$.
  Let $c, d \in \setN$.
  Assume that $\summary[^G]{S}(c) \geq d$.
  By \cref{lem:existence-of-complete-flow-trees},
  there exists a complete flow tree
  $(T, \lsymoperator, \linoperator, \loutoperator)$ for $G$
  with root $\lnode{\varepsilon}{c}{S}{d}$.
  Let $U$ denote the set of all nodes $t \in T$ such that every
  proper ancestor $s \pprefix t$ verifies $\lsym{s} \neq X$.
  By definition,
  the set $U$ is a nonempty and prefix-closed subset of $T$.
  Moreover,
  $\lsym{t} \neq X$ for each internal node $t$ of $U$,
  and $\lsym{t} \in (\{X\} \cup A)$ for each leaf $t$ of $U$.
  It follows that $U$ is a flow tree for $H$,
  since $\summary[^G]{\#} = \summary[^{H}]{\#}$ for every $\# \in (\{X\} \cup A)$.
  Note that the root of $U$ also satisfies $\lnode{\varepsilon}{c}{S}{d}$.
  We derive from \cref{lem:correctness-of-flow-trees} that
  $\summary[^{H}]{S}(c) \geq d$.

  Conversely,
  the same reasoning as above shows that $\summary[^{H}]{S}(c) \geq d$
  implies $\summary[^G]{S}(c) \geq d$.
  We have thus shown that
  $\summary[^G]{S}(c) \geq d \Leftrightarrow \summary[^{H}]{S}(c) \geq d$,
  for every $c, d \in \setN$.
  It follows that $\summary[^G]{S} = \summary[^{H}]{S}$.
  By definition of the ratio,
  we also get that $\ratio[^G]{S} = \ratio[^{H}]{S}$.
  \qed
\end{proof}

\factabstraction*
\begin{proof}
Recall that the the \emph{abstraction} of a nonterminal $X \in V$ with
$\ratio[^G]{X} = +\infty$, is the GVAS $H = (V, A \cup \{1\}, R')$ where
$R'$ is obtained from $R$ by removing all production rules $X \pstep \alpha$ and
replacing them by the two rules $X \pstep 1 X \mid \eps$.
  
\smallskip

  Let $D_X$ denote the set of nonterminals $S \in V$ such that
  $X$ is derivable from $S$ in $G$.
  Note that $D_X$ is also the set of nonterminals $S \in V$ such that
  $X$ is derivable from $S$ in $H$.
  Recall that $\ratio[^G]{X} = +\infty$.
  By definition of $H$, it holds that $\ratio[^H]{X} = +\infty$.
  It follows from \cref{lem:summary-composition-and-propagation} that
  $\ratio[^G]{S} = \ratio[^H]{S} = +\infty$ for every $S \in D_X$.

  \smallskip

  Now consider a nonterminal $S \not\in D_X$.
  It is readily seen that $G$ and $H$ have the same derivations
  $S \gstep{*} w$ starting from $S$.
  Therefore, $\lang[^G]{S} = \lang[^H]{S}$.
  It follows that $\summary[^G]{S} = \summary[^H]{S}$.
  By definition of the ratio,
  we also get that $\ratio[^G]{S} = \ratio[^H]{S}$.
  The observation that every nonterminal with finite ratio is
  in $V \setminus D_X$ concludes the proof.
  \qed
\end{proof}

\summarycomputableboundedratio*
\begin{proof}
  By \cref{prop:reduction-to-thin}, it is enough show the claim for thin GVAS.
  Let us consider a thin GVAS $G = (V, A, R)$ and a nonterminal $X \in V$.
  By \cref{thm:thin},
  the relation $\vstep{X}$ is effectively definable in Presburger arithmetic.
  Therefore, so is the set
  $\Sigma_X(n) \eqdef \{d \mid \exists c \leq n : c \vstep{X} d\}$,
  for any given $n \in \setNbar$.
  We derive that its supremum $\summary{X}(n) = \sup \Sigma_X(n)$
  is computable.

  \smallskip

  We now prove that the question whether $\ratio{X} < +\infty$
  is decidable.
  Since
  the relation $\vstep{X}$ is effectively definable in
  Presburger arithmetic,
  it is effectively semilinear \cite{Ginsburg:1966:PACIF}.
  This means that we can compute a finite family
  $\{(\vec{b}_i, \vec{P}_i)\}_{i \in I}$ of
  vectors $\vec{b}_i$ in $\setN^2$ and
  finite subsets $\vec{P}_i$ of $\setN^2$,
  with $\vec{P}_i = \{\vec{p}_i^1, \ldots, \vec{p}_i^{\ell_i}\}$,
  such that
  $\vstep{X} {=} \,\bigcup_{i \in I}
   \left(
     \vec{b}_i + \setN\vec{p}_i^1+\cdots+\setN\vec{p}_i^{\ell_i}
   \right)$.
  We consider two cases.
  \begin{itemize}
  \item
    If there exists $i \in I$ and a vector $\vec{p}$ in
    $\bigcup_{i \in I} \vec{P}_i$
    such that $\vec{p}(1) = 0$ and $\vec{p}(2) > 0$,
    then $\vec{b}_i(1) \vstep{X} (\vec{b}_i(2) + k \vec{p}(2))$
    for every $k \in \setN$.
    It follows that $\summary{X}(\vec{b}_i(1)) = +\infty$,
    which entails, by monotonicity of $\summary{X}$,
    that $\ratio{X} = +\infty$.
  \item
    Otherwise,
    there exists $\lambda \in \setR$ with $\lambda \geq 1$
    such that $\vec{p}(2) \leq \lambda \vec{p}(1)$ for every
    vector $\vec{p}$ in $\bigcup_{i \in I} \vec{P}_i$.
    Define $b = \max \{\vec{b}_i(2) \mid i \in I\}$.
    It is routinely checked that
    $d \leq \lambda c + b$
    for every $c, d$ with $c \vstep{X} d$.
    We derive that $\summary{X}(n) \leq \lambda n + b$
    for every $n \in \setN$,
    which implies that $\ratio{X} \leq \lambda$.
  \end{itemize}
  We have shown that $\ratio{X} = +\infty$ if, and only if,
  there exists $\vec{p}$ in $\bigcup_{i \in I} \vec{P}_i$
  with $\vec{p}(1) = 0$ and $\vec{p}(2) > 0$.
  The latter condition is decidable,
  and so is the former.
  \qed
\end{proof}

\begin{lemma}
  \label{fact:certificate-decidable}
  Let $(T, \lsymoperator)$ be a parse tree and
  let $\linoperator, \loutoperator: T \to \setN$.
  Then $(T, \lsymoperator, \linoperator, \loutoperator)$ is a certificate if
  the three following conditions hold:
  \begin{enumerate}
  \item[$(i)$]
    All internal nodes satisfy the first flow condition,
  \item[$(ii)$]
    Every leaf $t \in T$ with $\ratio{\lsym{t}} < +\infty$
    satisfies the second flow condition, and
  \item[$(iii)$]
    Every leaf $t \in T$ with $\ratio{\lsym{t}} = +\infty$
    has a proper ancestor $s \pprefix t$ such that
    $\lsym{s} = \lsym{t}$ and $\lin{s} < \lin{t}$.
  \end{enumerate}
\end{lemma}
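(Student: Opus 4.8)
The plan is to reduce the claim to verifying the second flow condition at the leaves that are not yet covered, and then to obtain that condition from \cref{lem:summary-for-infinite-ratio}. Recall (\cref{def:certificate}) that a certificate is a flow tree satisfying the ancestor property of condition $(iii)$. Since $(iii)$ is exactly that ancestor property and $(i)$ is the first flow condition, it suffices to check that the second flow condition $\lout{t} \le \summary{\lsym{t}}(\lin{t})$ holds at \emph{every} leaf $t$. For leaves with finite ratio this is precisely condition $(ii)$, so the only work is to prove $\summary{\lsym{t}}(\lin{t}) \ge \lout{t}$ for every leaf $t$ with $\ratio{\lsym{t}} = +\infty$.

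Fix such a leaf $t$ and let $s \pprefix t$ be the witness given by $(iii)$, so that $\lsym{s} = \lsym{t}$ (call it $X$) and $\lin{s} < \lin{t}$. First I would read off, from the path $s = p_0 \pprefix p_1 \pprefix \cdots \pprefix p_\ell = t$, a derivation $X \gstep{*} \hat u\, X\, \hat v$ obtained by applying only the production rules at the internal nodes $p_0, \ldots, p_{\ell-1}$ and leaving every sibling symbol \emph{unexpanded}; here $\hat u$ is the concatenation of the labels of the left siblings along the path (read top to bottom) and $\hat v$ the labels of the right siblings. The goal is then the key inequality $\summary{\hat u}(\lin{s}) \ge \lin{t}$. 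Granting it, since $\lin{t} > \lin{s}$ we get $\summary{\hat u}(\lin{s}) > \lin{s}$, so \cref{lem:summary-for-infinite-ratio} (applied with $n = \lin{s}$, using $\ratio{X} = +\infty$) gives $\summary{X}(\lin{s}) = +\infty$; monotonicity of $\summary{X}$ (\cref{rem:summary-monotonicity}) together with $\lin{s} \le \lin{t}$ then yields $\summary{X}(\lin{t}) = +\infty \ge \lout{t}$, as desired.

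The key inequality I would prove by a \emph{horizontal} induction along the path, establishing $\summary{L_0 \cdots L_{i-1}}(\lin{s}) \ge \lin{p_i}$, where $L_i$ is the word labelling the left siblings of $p_{i+1}$. The inductive step combines the composition law $\summary{uv} = \summary{v} \circ \summary{u}$ (\cref{lem:summary-composition-and-propagation}) with the first flow conditions at $p_i$, which chain the outputs of the successive left siblings into $\lin{p_{i+1}}$; this chaining requires the correctness inequality $\summary{\lsym{r}}(\lin{r}) \ge \lout{r}$ at each left sibling $r$ of the path. Working with the sibling \emph{symbols} rather than their full yields is exactly what makes these inequalities point in the usable direction, since $\summary{\lsym{r}} \ge \summary{w_r}$ for the yield $w_r$ of $r$ (\cref{lem:summary-composition-and-propagation}), not the reverse. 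Correctness at such an $r$ reduces, via \cref{lem:correctness-of-flow-trees}, to correctness at the leaves of the subtree rooted at $r$, all of which lie strictly to the left of $t$.

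The main obstacle is the apparent circularity: those left-hand leaves may themselves have infinite ratio, so their correctness is not yet known. I would resolve this by proving $\summary{\lsym{t}}(\lin{t}) \ge \lout{t}$ for the infinite-ratio leaves in increasing lexicographic (left-to-right) order. When $t$ is reached, every infinite-ratio leaf strictly to its left has already been handled and every finite-ratio leaf to its left satisfies $(ii)$; hence correctness holds at all leaves left of $t$, the subtrees rooted at the left siblings of the path are genuine flow trees, and \cref{lem:correctness-of-flow-trees} applies to them. This well-founded ordering is the heart of the argument; once it is in place the remaining steps are the routine composition and monotonicity manipulations sketched above, and we conclude that every leaf satisfies the second flow condition, so that $(T, \lsymoperator, \linoperator, \loutoperator)$ is a flow tree and therefore a certificate.
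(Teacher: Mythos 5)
Your proof is correct and follows essentially the same route as the paper's: both reduce the claim to the second flow condition at infinite-ratio leaves, extract a derivation $X \gstep{*} u X v$ from the subtree rooted at the witness ancestor $s$, chain the first flow conditions with already-established leaf correctness to get $\summary{u}(\lin{s}) > \lin{s}$, invoke \cref{lem:summary-for-infinite-ratio}, and break the circularity by treating the offending leaves in left-to-right order (the paper phrases this as choosing a counterexample whose left siblings' leaves already satisfy the condition). The only cosmetic difference is that you take $u,v$ to be the unexpanded sibling symbols along the path from $s$ to $t$, whereas the paper takes the labels of the leaves of the subtree rooted at $s$ lying to the left and right of $t$; both choices yield the needed inequality.
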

\begin{proof}
  Assume that $(i)$--$(iii)$ hold.
  We only need to show that every leaf of $T$ satisfies
  the second flow condition.
  By contradiction, assume that $T$ contains a leaf $t$ with
  $\lout{t} \not\leq \summary{\lsym{t}}(\lin{t})$.
  It follows from $(ii)$ and $(iii)$ that $\ratio{\lsym{t}} = +\infty$
  and that $t$ has a proper ancestor $s \pprefix t$ such that
  $\lsym{s} = \lsym{t}$ and $\lin{s} < \lin{t}$.
  Let $t_1, \ldots, t_\ell$,
  with $\lnode{t_i}{c_i}{\#_i}{d_i}$,
  denote the leaves of the subtree of $T$ rooted in $s$,
  in lexicographic order (informally, from left to right).
  Obviously, $t = t_k$ for some $k$ in $\{1, \ldots, \ell\}$.
  We may suppose, without loss of generality,
  that $t_1, \ldots, t_{k-1}$ satisfy the second flow condition.
  This means that
  $d_i \leq \summary{\#_i}(c_i)$ for all $i$ with $1 \leq i < k$.
  Since every internal node satisfies the first flow condition,
  it holds that $\lin{s} \geq c_1$ and
  $d_i \geq c_{i+1}$ for all $i$ with $1 \leq i < k$.
  We derive from the monotonicity of summary functions that
  \begin{align*}
    \summary{\#_1 \cdots \#_{k-1}}(\lin{s})
    & \ = \ \summary{\#_{k-1}} \circ \cdots \circ \summary{\#_1}(\lin{s}) & [\text{\cref{lem:summary-composition-and-propagation}}]\\
    & \ \geq \ \summary{\#_{k-1}} \circ \cdots \circ \summary{\#_1}(c_1) & [\lin{s} \geq c_1]\\
    & \ \geq \ c_k & [\summary{\#_i}(c_i) \geq d_i \geq c_{i+1}]\\
    & \ > \ \lin{s} & [c_k = \lin{t} > \lin{s}]
  \end{align*}
  Define $u = \#_1 \cdots \#_{k-1}$, $X = \lsym{s} = \#_k$, and
  $v = \#_{k+1} \cdots \#_\ell$.
  Recall that $t_1, \ldots, t_\ell$ are the leaves,
  in lexicographic order,
  of the subtree of $T$ rooted in $s$.
  Therefore, we have the derivation $X \gstep{*} u X v$.
  We obtain from \cref{lem:summary-for-infinite-ratio} that
  $\summary{X}(\lin{s}) = +\infty$.
  Since $\lin{t} \geq \lin{s}$, we get that $\summary{X}(\lin{t}) = +\infty$,
  which contradicts our assumption that
  $\lout{t} \not\leq \summary{X}(\lin{t})$.
  \qed
\end{proof}

\end{document}